\newtheorem{theorem}{Theorem}[section]
\newtheorem{prop}[theorem]{Proposition}
\newtheorem{cor}[theorem]{Corollary}
\newtheorem{fact}[theorem]{Fact}
\newtheorem*{schur}{Schur's Lemma}
\newtheorem*{thm3.4}{Theorem 3.4 of \cite{BarceloEtAl}}
\newtheorem*{thm3.8}{Theorem 3.8 of \cite{BarceloEtAl}}
\newtheorem*{lem3.6}{Lemma 3.6 of \cite{BarceloEtAl}}
\newtheorem*{thm4.2}{Theorem 4.2 of \cite{BarceloEtAl}}
\theoremstyle{definition}
\newtheorem{definition}[theorem]{Definition}
\newtheorem{example}[theorem]{Example}
\theoremstyle{remark}
\newtheorem{remark}[theorem]{Remark}
\newtheorem{notation}[theorem]{Notation}
\numberwithin{equation}{section}
\newcommand{\ind}{\mathord\uparrow} 
\newcommand\rank[4]{\begin{array}{c}#1\\#2\\#3\\#4\end{array}}
\newcommand{\p}{{\bf p}}
\begin{document}

\title[Structured Voting for Committees]{Structured Voting for Structured Committees}

\author[Crisman]{Karl-Dieter Crisman}
\address{Department of Mathematics and Computer Science, Gordon College}
\curraddr{}
\email{karl.crisman@gordon.edu}
\thanks{This work was partly supported by a Gordon College sabbatical leave.}



\subjclass[2000]{Primary 91B12; Secondary 91B14,20C05}

\date{}

\keywords{Voting theory, committees,  representation theory, social choice}

\begin{abstract}
There has been much recent work on multiwinner voting systems.  However, sometimes a committee is highly structured, and if we want to vote for such a committee, our voting method should be more structured as well.  

We consider committees consisting of representatives for disjoint parts of a collective; for instance, $n$ departments of an organization might each have $m$ possible representatives to choose from.  However, in our model the \emph{entire population} still gets to vote on the committee, and we need to respect that symmetry, which we do by building on earlier work of Barcelo et al.~using representation theory of wreath products.

The main thrust of this paper is to advertise and catalog the surprising variety of possible points-based (linear) voting methods in this framework.  For instance, if a voter ranks all possible structured committees, it may be reasonable to use weights which depend on the relations between the ranked committees, and not just their order as in `traditional' voting frameworks!
\end{abstract}

\maketitle

\section{Introduction}

We begin by justifying our interest in studying \emph{structured} committees, giving some basic examples, and outlining the rest of the paper.

\subsection{Motivation and Context}

\subsubsection*{Why Committees?}
There has been a recent explosion of interest in multiwinner elections, a relatively `new challenge' for the social choice community (\cite{FaliszewskiEtAlNewChallenge}).  
Nearly every voting system used in practice has the \emph{possibility} of more than one winner\footnote{Possibly made resolute by coin toss.
}, but in this area the focus is on systems taking voter (or artificial-intelligence) preferences to come up with a selected set of winners of a predetermined size.  Examples might range from voting for municipal committees\footnote{In my own city, one may vote for up to $n$ of the (at most) $2n$ candidates for $n$ seats in the general election, where $n=4$ for councilors-at-large and $n=6$ for the school committee.} to choosing corporate boards to a streaming company selecting five `best' options to pick from for your family movie night.
 
\subsubsection*{Why Structured Committees?}
In a series of intriguing papers starting with \cite{RatliffPublicChoice}, Ratliff describes elections involving \emph{structured} committees.  For a presidential search committee at his institution, each of three academic divisions were to have one of two possible representatives selected; however, the faculty as a whole was allowed to vote (separately) on the representatives for each division.  As it turned out, these preferences were \emph{not} separable; despite essentially universal desire for a mixed-gender committee (particularly since his institution had only recently even begun admitting male students!), the committee selected was all men.

Such structure is inherent in any representative body, but what is unusual here is that the \emph{entire} electorate votes on the \emph{entire} committee, yet with this additional structure.  Such nonseparable structure in a group decision need not be about actual representation, of course.  In an entertainment example, a group of friends going to a large festival might want to attend one sporting event, one play, and one musical performance -- but one could easily imagine the friend who either wants all three outdoors or all three indoors, and would strongly dislike any mixing of location.

For a related paradox, see \cite{AliceBob}.  One important point is that this is not the same kind of committee diversity raised in \cite{ElkindEtAlMultiwinner}
where the objective may either be to have some sort of proportional representation (as with diverse electorates) or simply at least one option for each category regardless of popularity (as with a buffet).  In neither of these cases is the structure necessarily fixed ahead of time, as in our model.

\subsubsection*{Why Structured Voting?}
There are many possible approaches one could take in such circumstances. 
In \cite{RatliffPublicChoice}, Ratliff describes the experience of asking all faculty to rank \emph{all eight} possible committees, and then using the Borda count on submitted preferences.  In \cite{RatliffSelectingDiverse} and \cite{RatliffSaariComplexities} (the latter joint work with Saari), voting methods are analyzed which involve separate point values for a `diversity' candidate, and requiring diverse ballots (where `diverse' here may simply mean non-homogeneous).  

The point with any such system is to join the structure of representative systems with a universal `electorate' in order to achieve an acknowledged structural goal.  One goal of research in this area should be to contribute to the structural understanding of structured voting procedures.

\subsection{Background and Examples}
A big step in this direction was taken with \cite{BarceloEtAl}, building on thesis work in \cite{LeeThesis}.   Suppose there are $n$ departments, each potentially represented by one of $m$ candidates, yielding $m^n$ possible committees.  The authors suppose (as in \cite{RatliffPublicChoice}) that voters will submit a ranking of \emph{all} of these structured committees, and then a series of weights (not necessarily Borda weights) are applied to give a score for each committee, with the highest score yielding a winning committee.  It is important to note that while this could be useful for `diverse committee' voting as in \cite{RatliffPublicChoice}, that additional layer is \emph{not} part of their model (or ours).

To see how more analysis might help, consider the smallest possible case of $m=n=2$, which occurs several times in \cite{BarceloEtAl} and \cite{LeeThesis}.  Here, there are only four possible committees, and voters are asked to rank all four of them.  The committee notation $(1_1,2_1)$ designates selecting the first option in both departments $1$ and $2$.

\begin{example}\label{ex:2wr2first}
Suppose we have three voters, three with preference $(1_1,2_1)\succ (1_2,2_2)\succ (1_1,2_2)\succ (1_2,2_1)$ and two with preference $(1_2,2_1)\succ (1_2,2_2)\succ (1_1,2_2)\succ (1_1,2_1)$.  
If we apply Borda weights $(3,2,1,0)$ to these voters, we see easily that $(1_2,2_2)$ receives ten points and is the winner.
\end{example}

So far, this example isn't really any different from `normal' voting theory.  But consider the \emph{full} set of scores.

\begin{example}\label{ex:2wr2second}
Continuing Example \ref{ex:2wr2first}, we obtain the scoring vector $[9,5,6,10]^T$ in lexicographical order by committee ($(1_1,2_1)$ through $(1_2,2_2)$).  However, we can rewrite this as follows:
$$\left[ \begin{array}{c}9\\5\\6\\10\end{array}\right] = \left[ \begin{array}{c}9.5\\5.5\\5.5\\9.5\end{array}\right] + \left[ \begin{array}{c}-.5\\-.5\\.5\\.5\end{array}\right]$$
Considering this additional structure, it seems we have an outcome which primarily emphasizes the two \emph{disjoint} committees $(1_1,2_1)$ and $(1_2,2_2)$, with a slight emphasis on committees involving $1_2$ as a representative.
\end{example}

The additional structure gives us more information -- information which should be useful in analyzing anything from separability paradoxes to manipulation.

The decomposition in the previous example may seem fairly arbitrary, but it is not.  The authors of \cite{BarceloEtAl} use representation theory to canonically decompose both the space of all possible voter preferences, or \emph{profiles}, as well as that of possible outcome point totals.  There are several main results, all of which we will precisely state in the ensuing sections of this paper (or see the next subsection for a brief preview).
\begin{itemize}
\item The space of \emph{profiles of rankings} has the structure of many copies of the regular representation of the wreath product group $S_m\wr S_n$ (defined below).  Essentially, every possible structured subspace is there, many times!
\item The space of \emph{results}, or \emph{committees}, has a much tighter structure, indexed only by integers $0\leq k\leq n$, so we know that most profile information will be ignored by any procedure involving weights.
\item Just as in `ordinary' voting theory, we can construct profiles which can lead to radically different outcomes if we use different weights.
\end{itemize}

These results are a major step forward in understanding structured voting on committees.

\subsection{This Paper}
This paper builds on\footnote{And, in certain cases, fixes statements from.} \cite{BarceloEtAl} to provide more explicit information about such voting systems.  We alternate more technical results about the vector spaces in question (which may be skimmed on a first reading) with detailed information about a large number of new procedures.  In addition to this introduction and some final thoughts in Section \ref{sec:conclusion}, it proceeds as follows.

\begin{itemize}
\item In Section \ref{sec:results} we give explicit, \emph{meaningful} bases for the space of possible results.  For example, a nice analog to Saari's Borda subspace is given a good basis in Theorem \ref{thm:borda}.  In addition, we rectify the statement of the relevant theorem on the result space in \cite{BarceloEtAl}.
\item Then in Section \ref{sec:ballot} we use this information to describe a new set of procedures where the voter is only required to submit their most-preferred committee, but meaningful point totals are given to all committees.  Proposition \ref{prop:neutralpointsR} describes them all with just $n+1$ parameters, which we then give examples of.
\item In Section \ref{sec:profile} we clarify the role of orbits (or `pockets', in \cite{LeeThesis}) in the (huge) space of profiles, and its implications for creating voting paradoxes.  Example \ref{ex:2wr2allrankings} gives a clear analysis for the $n=m=2$ case.
\item Section \ref{sec:systems} uses this machinery to richly expand the number of systems heretofore recognized, using \emph{multiple sets of weights} to create unexpectedly interesting ways to vote on structured committees.  Proposition \ref{prop:numberconstants} gives the full variety of $\frac{m^n!m^n}{(m!)^n n!}$ parameters, which we then fully analyze for the case $n=m=2$ in Subsections \ref{subsec:systems2wr2same} and \ref{subsec:systems2wr2different}.
\item We append more involved proofs in Appendix \ref{sec:proofs}.
\end{itemize}

We hope this paper will be another encouragement to researchers -- and practitioners -- to consider structure in multi-winner voting.

\section{The Committee Space}\label{sec:results}

\subsection{Introduction}
Given the structure already mentioned, there are $m^n$ total possible committees, which we notate as follows.

\begin{notation}
Suppose there are $n$ departments, each with $m$ options for representation in a committee.  Committees are denoted as $(1_{j_1},2_{j_2},\ldots n_{j_n})$, where $i_{j_i}$ means the member selected for this committee from the $i$th department is the $j_i$th option for that department's representative.  We order these lexicographically. (See Section 2.1 of \cite{BarceloEtAl}.)  The set of all such committees may be denoted $\mathcal{C}_{m,n}$ for convenience.
\end{notation}

\begin{definition}
The finite-dimensional $\mathbb{Q}$-vector space with a natural basis indexed by the elements of $\mathcal{C}_{m,n}$ is called the \emph{committee space}, or the \emph{results space} in certain contexts.  If need be, we will call this space $R=R_{m,n}$ (because $C$ and $c$ will stand for committees).
\end{definition}

Our main goal in this section is to have a systematic way to decompose vectors in this space meaningful in terms of the committee structure.

\begin{example}\label{ex:2wr2third}
We can further decompose the vector in Example \ref{ex:2wr2second} as a sum of three mutually orthogonal vectors
$$\left[ \begin{array}{c}9\\5\\6\\10\end{array}\right] = \left[ \begin{array}{c}7.5\\7.5\\7.5\\7.5\end{array}\right] + \left[ \begin{array}{c}-.5\\-.5\\.5\\.5\end{array}\right] + \left[ \begin{array}{c}2.5\\-2.5\\-2.5\\2.5\end{array}\right] \quad\begin{array}{c}(1_1,2_1) \\(1_1,2_2) \\(1_2,2_1) \\(1_2,2_2) \end{array}\; .$$
Note from the committee labels on the right that if we relabel the committees in any systematic way (such as $1_1\leftrightarrow 1_2$), these summands at most change sign on an individual basis, while the original vector would change substantially.
\end{example}

In examining this example, it should be clear there is no particular reason it has to indicate scores from a voting method.   We could just as easily be examining a set of thirty voters, each of which has a favorite committee out of the four options.  Under this interpretation, one third of the voters think $1_2$ and $2_2$ should be working together.  This explains why we can call this vector space either a results space or a committee space; Section \ref{sec:ballot} will pursue this idea.

More immediately, it seems plausible that our initial ad-hoc decomposition may be able to be refined meaningfully, and hopefully systematically.  The rest of this section will unpack this fully, building throughout on \cite{BarceloEtAl}.

\subsection{Representations}\label{subsection:reps}

To exploit the structure in our committees, we will need a symmetry group. While there are $m^n$ committees, the additional structure evident by department suggests we should not consider all $(m^n)!$ different permutations of all committees, but only the ones corresponding a combination of permutations \emph{of} departments combined with permutations \emph{within} departments.  Recall the following (\cite{BarceloEtAl} (2.2)):

\begin{definition}
The \emph{wreath product} $S_m\wr S_n$ is given by the set of all pairs $(\sigma;\pi)$ where $\pi\in S_n$ and $\sigma=(\sigma_1,\sigma_2,\ldots ,\sigma_n)\in S_m^n$.  The group structure is given as expected for $S_n$, but where $\pi$ permutes the elements of $\sigma'$ if we multiply $(\sigma;\pi)$ by $(\sigma';\pi')$.
\end{definition}

Most important for this paper is that there is an action of $S_m\wr S_n$ on $\mathcal{C}_{m,n}$, given by sending $(1_{j_1},2_{j_2},\ldots ,n_{j_n})$ to $(1_{\sigma_1(j_{\pi^{-1}(1)})},2_{\sigma_2(j_{\pi^{-1}(2)})},\ldots, n_{\sigma_n(j_{\pi^{-1}(n)})})$.  This is notationally dense, but is the same as first permuting the departments by $\pi$ and then permuting within each department $i$ by $\sigma_i$.  The relabeling mentioned in Example \ref{ex:2wr2third} is the action of the element  $(((12),e);e)$.

This action on $\mathcal{C}_{m,n}$ immediately extends to an action on $R_{m,n}$ by permuting the entries by basis vector, which clearly commutes with multiplication by scalars in $\mathbb{Q}$.  This gives $R$ the structure of a (permutation) \emph{representation} over $\mathbb{Q}$, or $\mathbb{Q}S_m\wr S_n$-module\footnote{See e.g.~\cite{Sagan} for basic information on representations.}.

Now recall that an \emph{irreducible} representation is a vector space with compatible group (here, $S_m\wr S_n$) action such that no proper nontrivial subspace is closed under the group action.  One of the wonderful facts about representations is that, given the (finite) symmetry group, there is always a decomposition of any representation $V$ as a direct sum of copies of a \emph{finite} number of isomorphism classes of irreducible representations -- and this can often be explicitly computed.

To do so in this case, recall that a \emph{partition} $\lambda$ of $m$, or $\lambda \vdash m$, is a (weakly) ordered set of positive integers summing to $m$.   For $S_m\wr S_n$, \cite{JamesKerber} tells\footnote{See section 4.3.34, sections 4.4.1-4.4.3, explicitly so on page 154, with computations for $S_2\wr S_3$ on pages 156-7.} us that the irreducible representations of $S_m\wr S_n$ are indexed $S^{\boldsymbol{\lambda}}$, where $\boldsymbol{\lambda}=(\lambda_1,\lambda_2,\ldots ,\lambda_{p(m)})$.  Here, $p(m)$ is the partition counting function (so the entries are ordered by the lexicographic ordering of partitions of $m$ as well), and $\lambda_j\vdash \ell_j$ where $\sum_{j=1}^{p(m)} \ell_j=n$ (and where if $\ell_j=0$ then the partition is necessarily $\emptyset$).

In general a space indexed by such `partition of partitions' could be quite complex, but for $R_{m,n}$ \cite{BarceloEtAl} shows the committee space is much simpler.

\begin{theorem}\label{Correct3.8}
The committee/results space $R_{m,n}$ has the following decomposition into simple $\mathbb{Q}S_m\wr S_n$-submodules: $$R_{m,n}\simeq \bigoplus_{k=0}^n S^{((n-k),(k),\emptyset,\ldots,\emptyset)}\, .$$  Each such submodule has dimension $\binom{n}{k}(m-1)^k$, and is in fact a direct sum of $\binom{n}{k}$ $S_m^n$-isomorphic (but not equal) modules of the form $S^{(m)^{\otimes^{n-k}}}\otimes S^{(m-1,1)^{\otimes^k}}$.
\end{theorem}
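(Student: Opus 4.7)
The plan is to exploit the fact that $R_{m,n}$ carries a natural tensor product structure. As a $\mathbb{Q}$-vector space, $R_{m,n} \cong M^{\otimes n}$, where $M$ is the $m$-dimensional permutation representation of $S_m$ on $\{1,\ldots,m\}$: the basis vector for committee $(1_{j_1},\ldots,n_{j_n})$ corresponds to the pure tensor $e_{j_1}\otimes\cdots\otimes e_{j_n}$. Under this identification, the base group $S_m^n\subseteq S_m\wr S_n$ acts componentwise on the tensor factors, while $S_n$ permutes the factors; this is precisely the wreath product action recalled above.

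First I would decompose each factor as $M = S^{(m)}\oplus S^{(m-1,1)}$ (the trivial and standard representations of $S_m$). Distributing the tensor product yields a decomposition of $R_{m,n}$ as an $S_m^n$-module into $2^n$ summands indexed by subsets $T\subseteq\{1,\ldots,n\}$, where $T$ records which tensor factors are the standard representation. Grouping by $|T|=k$ gives $\binom{n}{k}$ mutually $S_m^n$-isomorphic (but certainly not equal) summands, each isomorphic to $S^{(m)\otimes(n-k)}\otimes S^{(m-1,1)^{\otimes k}}$ and each of dimension $(m-1)^k$. This already produces the advertised dimensions and recovers $\sum_k\binom{n}{k}(m-1)^k = m^n$.

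Next I would add the $S_n$-action. An element $\pi\in S_n$ sends the summand indexed by $T$ to the one indexed by $\pi(T)$, so $S_n$ permutes the $\binom{n}{k}$ copies of a fixed isomorphism type transitively, with stabilizer $S_{n-k}\times S_k$. Therefore the direct sum $R_{m,n,k}$ of all summands with $|T|=k$ is an $S_m\wr S_n$-submodule, closed under both the base and top group actions. The final step is to identify $R_{m,n,k}$ with the wreath product irreducible $S^{((n-k),(k),\emptyset,\ldots,\emptyset)}$ from the James--Kerber classification. Using their construction, that irreducible is obtained by inducing from the Young-like subgroup $S_m\wr S_{n-k}\,\times\,S_m\wr S_k$ the outer tensor of the two ``pure'' wreath product irreducibles associated to the partitions $(n-k)$ of the trivial $S_m$-representation and $(k)$ of the standard $S_m$-representation. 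Since the summand of $R_{m,n,k}$ pinned to a fixed $T$ with $|T|=k$ is exactly the outer tensor product on which $S_{n-k}\times S_k$ acts trivially on isomorphism classes, induction from the stabilizer of $T$ reproduces $R_{m,n,k}$.

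The main obstacle will be carefully matching the combinatorial bookkeeping against the James--Kerber labeling conventions, and in particular verifying the irreducibility of the induced module — for this I would invoke Mackey's criterion, using that $S^{(m)}$ and $S^{(m-1,1)}$ are distinct $S_m$-irreducibles, so the outer tensor factors are non-isomorphic across different choices of $T$. The dimension computation $\binom{n}{k}(m-1)^k$ and the total $\sum_k\binom{n}{k}(m-1)^k=m^n$ then serve as the final sanity check that no summand has been omitted and that the claimed decomposition is complete.
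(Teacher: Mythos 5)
Your proposal is correct and follows essentially the same route as the paper: identify $R_{m,n}$ with $\mathcal{N}^{\otimes n}$ under the product action (which is exactly the ``extended'' module the paper emphasizes, correcting the induced/extended confusion in Barcelo et al.), decompose $\mathcal{N}=S^{(m)}\oplus S^{(m-1,1)}$, distribute and group the $2^n$ summands by $|T|=k$, and realize each block as induced from the stabilizer $S_m\wr(S_{n-k}\times S_k)$ so as to match the James--Kerber labels. The only cosmetic difference is that you propose Mackey's criterion for irreducibility where the paper defers to the James--Kerber classification of wreath-product irreducibles; these are the same underlying argument.
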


This is essentially \cite{BarceloEtAl}, Theorem 3.8, which had an error in its statement and proof; see \ref{proof:Correct3.8} below for the proof.  We have already used this in Examples \ref{ex:2wr2first}, \ref{ex:2wr2second}, and \ref{ex:2wr2third}, where the spaces are indexed by $((2),\emptyset)$, $((1),(1))$, and $(\emptyset,(2))$.

\subsection{Explicit Construction of Useful Bases}

In order to use these spaces concretely, the remainder of this section will give as much information as possible about a `useful' basis for each of the irreducible subspaces in question.  To begin, here are two conventions.

\begin{itemize}
\item We use the lexicographic order on committees of the form $(1_{j_1},2_{j_2},\ldots , n_{j_n})$ throughout for the entries in our (column) vectors in $\mathbb{Q}^{m^n}$.
\item The natural representation $\mathcal{N}\simeq S^{(m)}\oplus S^{(m-1,1)}$ has as a convenient basis the (trivial space) vector $[1,1,\cdots ,1]^T$ and any $m-1$ vectors from the set
$$\left\langle  \left[\begin{array}{c}m-1\\-1\\-1\\ \vdots\\-1\end{array}\right], \left[\begin{array}{c}-1\\m-1\\-1\\ \vdots\\-1\end{array}\right], \cdots ,\left[\begin{array}{c}-1\\-1\\ \vdots\\-1\\m-1\end{array}\right]\right\rangle$$ 
where we will typically take the first $m-1$ of these.
\end{itemize}

\begin{fact}\label{fact:kronecker}
If we have two bases $\mathcal{B}_1,\mathcal{B}_2$ of $d_1,d_2$-dimensional subspaces $V_1,V_2$ of $\mathbb{Q}^n$, we can consider their column matrices $M_{\mathcal{B}_1},M_{\mathcal{B}_2}$.  The \emph{Kronecker product} of these matrices, where each element of $M_{\mathcal{B}_1}$ is multiplied by each element of $M_{\mathcal{B}_2}$, gives a basis for the vector space $V_1\otimes V_2$ as a subspace of $\mathbb{Q}^{n^2}$.
\end{fact}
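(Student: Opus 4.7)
The plan is to recognize that the Fact is essentially a bookkeeping statement matching two constructions: the abstract tensor product of vector spaces and the Kronecker product of matrices. So I would begin by fixing notation: write $\mathcal{B}_1=\{v_1,\ldots,v_{d_1}\}$ and $\mathcal{B}_2=\{w_1,\ldots,w_{d_2}\}$ as the columns of $M_{\mathcal{B}_1}$ and $M_{\mathcal{B}_2}$ respectively, and identify $\mathbb{Q}^n\otimes\mathbb{Q}^n$ with $\mathbb{Q}^{n^2}$ via the standard (lexicographic) basis map $e_a\otimes e_b\mapsto e_{(a-1)n+b}$. Under this identification, a direct entrywise check shows that the Kronecker product $v\otimes w$ of column vectors in $\mathbb{Q}^n$ corresponds to the tensor $v\otimes w\in\mathbb{Q}^n\otimes\mathbb{Q}^n$.

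Next, I would extend this to matrices. By the block definition of the Kronecker product, the columns of $M_{\mathcal{B}_1}\otimes M_{\mathcal{B}_2}$, indexed lexicographically by pairs $(i,j)$ with $1\le i\le d_1$ and $1\le j\le d_2$, are precisely the vectors $v_i\otimes w_j\in\mathbb{Q}^{n^2}$. There are $d_1d_2$ of them, matching $\dim(V_1\otimes V_2)$.

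The only substantive thing to verify is that these $d_1d_2$ vectors actually form a basis of $V_1\otimes V_2$. They clearly lie in $V_1\otimes V_2$ (each $v_i\in V_1$ and $w_j\in V_2$). Spanning and linear independence both follow from the universal property of the tensor product: given bases $\{v_i\}$ and $\{w_j\}$ of $V_1$ and $V_2$, the set $\{v_i\otimes w_j\}$ is by construction a basis of $V_1\otimes V_2$. Alternatively, one can argue directly: a vanishing linear combination $\sum c_{ij}(v_i\otimes w_j)=0$ can be rewritten as $\sum_i v_i\otimes\bigl(\sum_j c_{ij}w_j\bigr)=0$, and linear independence of the $v_i$ together with nondegeneracy of the tensor forces each $\sum_j c_{ij}w_j=0$, hence each $c_{ij}=0$ by linear independence of the $w_j$.

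I do not expect any real obstacle here; the entire content is ensuring the coordinate identification $\mathbb{Q}^n\otimes\mathbb{Q}^n\cong\mathbb{Q}^{n^2}$ is compatible with the Kronecker product convention, which is the one potentially confusing bookkeeping point. Once that is pinned down, the proof reduces to citing (or re-deriving) the standard fact that the tensor product of bases is a basis of the tensor product.
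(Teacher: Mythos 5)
Your proof is correct and is exactly the standard argument the paper implicitly relies on: the paper states this as a \emph{Fact} with no proof given, treating it as standard linear algebra. Your care in pinning down the identification $\mathbb{Q}^n\otimes\mathbb{Q}^n\cong\mathbb{Q}^{n^2}$ compatibly with the lexicographic column ordering of the Kronecker product, and then invoking (or re-deriving) that tensor products of bases form a basis, covers everything the statement asserts.
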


By construction in Theorem \ref{Correct3.8} and its proof, we may build up the spaces by doing the Kronecker product of bases for $S^{(m)}$ and $S^{(m-1,1)}$ in various combinations.  Here is an example.

\begin{example}\label{ex:3wr3tensor21}
Let $m,n=3$ (three departments, three candidates per department), and let $k=1$.  Then the process in Fact \ref{fact:kronecker} gives the following bases for the three isomorphic (but not equal) spaces $S^{(3)}\otimes S^{(3)}\otimes S^{(2,1)}$, $S^{(3)}\otimes S^{(2,1)}\otimes S^{(3)}$, and $S^{(2,1)}\otimes S^{(3)}\otimes S^{(3)}$.

\scriptsize
$$\left\langle
\left[\begin{array}{c}2 \\-1 \\-1 \\2 \\-1 \\-1 \\2 \\-1 \\-1 \\2 \\-1 \\-1 \\2 \\-1 \\-1 \\2 \\-1 \\-1 \\2 \\-1 \\-1 \\2 \\-1 \\-1 \\2 \\-1 \\-1\end{array}\right],
\left[\begin{array}{c}-1 \\2 \\-1 \\-1 \\2 \\-1 \\-1 \\2 \\-1 \\-1 \\2 \\-1 \\-1 \\2 \\-1 \\-1 \\2 \\-1 \\-1 \\2 \\-1 \\-1 \\2 \\-1 \\-1 \\2 \\-1\end{array}\right]
\right\rangle, \left\langle
\left[\begin{array}{c}2 \\2 \\2 \\-1 \\-1 \\-1 \\-1 \\-1 \\-1 \\2 \\2 \\2 \\-1 \\-1 \\-1 \\-1 \\-1 \\-1 \\2 \\2 \\2 \\-1 \\-1 \\-1 \\-1 \\-1 \\-1\end{array}\right],
\left[\begin{array}{c}-1 \\-1 \\-1 \\2 \\2 \\2 \\-1 \\-1 \\-1 \\-1 \\-1 \\-1 \\2 \\2 \\2 \\-1 \\-1 \\-1 \\-1 \\-1 \\-1 \\2 \\2 \\2 \\-1 \\-1 \\-1\end{array}\right]
\right\rangle, \left\langle
\left[\begin{array}{c}2 \\2 \\2 \\2 \\2 \\2 \\2 \\2 \\2 \\-1 \\-1 \\-1 \\-1 \\-1 \\-1 \\-1 \\-1 \\-1 \\-1 \\-1 \\-1 \\-1 \\-1 \\-1 \\-1 \\-1 \\-1\end{array}\right],
\left[\begin{array}{c}-1 \\-1 \\-1 \\-1 \\-1 \\-1 \\-1 \\-1 \\-1 \\2 \\2 \\2 \\2 \\2 \\2 \\2 \\2 \\2 \\-1 \\-1 \\-1 \\-1 \\-1 \\-1 \\-1 \\-1 \\-1\end{array}\right]
\right\rangle
\quad\begin{array}{c} (1_1,2_1,3_1) \\(1_1,2_1,3_2) \\(1_1,2_1,3_3)\\ (1_1,2_2,3_1) \\(1_1,2_2,3_3) \\(1_1,2_2,3_3)\\ (1_1,2_3,3_1) \\(1_1,2_3,3_3) \\(1_1,2_3,3_3)\\
 (1_2,2_1,3_1) \\(1_2,2_1,3_2) \\(1_2,2_1,3_3)\\ (1_2,2_2,3_1) \\(1_2,2_2,3_3) \\(1_2,2_2,3_3)\\ (1_2,2_3,3_1) \\(1_2,2_3,3_3) \\(1_2,2_3,3_3)\\
 (1_3,2_1,3_1) \\(1_3,2_1,3_2) \\(1_3,2_1,3_3)\\ (1_3,2_2,3_1) \\(1_3,2_2,3_3) \\(1_3,2_2,3_3)\\ (1_3,2_3,3_1) \\(1_3,2_3,3_3) \\(1_3,2_3,3_3)  \end{array}$$
\normalsize

The direct sum of the three spaces generated by these bases clearly will now have an $S_n=S_3$-action as well, so that larger space will be $S^{((2),(1))}$ by our earlier construction.

\end{example}

The basis in Example \ref{ex:3wr3tensor21} is interesting -- for each department $j$, we have a vector with $2$ points for a committee with $j_1$ (and $-1$ points otherwise) and similarly for $j_2$.    By construction the following is easy to see.

\begin{prop}\label{prop:kronecker}
Choose a particular ordering of the tensors of $S^{(m)^{\otimes^{n-k}}}\otimes S^{(m-1,1)^{\otimes^k}}$ corresponding to choosing $k$ of the $n$ departments.   For each possible subcommittee $c_k$ for these departments, we can use the construction in Fact \ref{fact:kronecker} (with $m-1$ in the position of the candidate for department $i$ for $c_k$) to construct a basis vector $b_{c_k}$ in (that ordering of) $S^{(m)^{\otimes^{n-k}}}\otimes S^{(m-1,1)^{\otimes^k}}$.  The value of an entry will be $(m-1)^\ell (-1)^{k-\ell}(1)^{n-k}=\pm (m-1)^{\ell}$ precisely if it agrees with the subcommittee $c_k$ in exactly $\ell$ of the $k$ departments in question.

Doing so for all orderings of $S^{(m)^{\otimes^{n-k}}}\otimes S^{(m-1,1)^{\otimes^k}}$ will yield a basis for $S^{((n-k),(k),\emptyset,\ldots,\emptyset)}$.  
\end{prop}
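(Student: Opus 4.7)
The plan is to verify the two claims of the proposition in sequence: first the entry-value formula for $b_{c_k}$, then the claim that collecting such vectors over all orderings yields a basis.

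For the entry formula, I would just apply Fact \ref{fact:kronecker} directly. In the Kronecker product of $n$ column vectors (one per department), the entry indexed by committee $(1_{a_1},\ldots,n_{a_n})$ is by definition the product of the $a_i$-th coordinate of the $i$-th factor. Under the conventions just set, each of the $n-k$ factors living in $S^{(m)}$ contributes $1$ regardless of $a_i$, while each of the $k$ factors living in $S^{(m-1,1)}$ contributes $m-1$ exactly when $a_i$ matches the $c_k$-chosen candidate at that department and $-1$ otherwise. Multiplying through, if exactly $\ell$ of the $k$ distinguished departments agree with $c_k$, the total entry equals $(m-1)^\ell(-1)^{k-\ell}(1)^{n-k}=\pm(m-1)^\ell$, as stated.

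For the basis claim, I would fix one of the $\binom{n}{k}$ orderings—that is, one choice of which $k$ of the $n$ tensor positions carry the $S^{(m-1,1)}$ factor. By our conventions the basis of $S^{(m-1,1)}$ has $m-1$ elements (say the first $m-1$ of the vectors listed), and Fact \ref{fact:kronecker} then gives that the $(m-1)^k$ Kronecker products $b_{c_k}$, indexed by the subcommittees $c_k$ that can be built from those chosen basis vectors, form a basis of the $(m-1)^k$-dimensional space $S^{(m)^{\otimes^{n-k}}}\otimes S^{(m-1,1)^{\otimes^k}}$ corresponding to this ordering. By Theorem \ref{Correct3.8}, $S^{((n-k),(k),\emptyset,\ldots,\emptyset)}$ is the internal direct sum of these $\binom{n}{k}$ subspaces (one per ordering), so the disjoint union of the chosen bases—of total size $\binom{n}{k}(m-1)^k$, matching the dimension—is a basis of the whole submodule.

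There is no serious obstacle here; the proposition is essentially an unpacking of the Kronecker-product construction that already underlies Theorem \ref{Correct3.8} and Fact \ref{fact:kronecker}. The one point requiring care in the writeup is to emphasize that the $\binom{n}{k}$ subspaces associated to the different orderings are genuine direct summands of $S^{((n-k),(k),\emptyset,\ldots,\emptyset)}$ inside $R_{m,n}$ (not merely isomorphic copies), so that no cross-ordering linear-independence check is needed beyond what Theorem \ref{Correct3.8} already provides. With that observation flagged, everything else is bookkeeping with the tensor structure.
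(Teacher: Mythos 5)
Your proposal is correct and matches the paper's (implicit) argument: the paper offers no separate proof of this proposition, asserting it follows ``by construction'' from Fact \ref{fact:kronecker} and Theorem \ref{Correct3.8}, which is exactly the unpacking you carry out. Your explicit handling of the entrywise product formula and your reliance on Theorem \ref{Correct3.8} for the internal direct-sum decomposition (so that no cross-ordering independence check is needed) are both sound, and your restriction to the $(m-1)^k$ subcommittees built from the chosen basis vectors of $S^{(m-1,1)}$ correctly resolves the slight looseness in the statement that Remark \ref{rmk:anywilldo} addresses.
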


\begin{remark}\label{rmk:anywilldo}
We observe that in Fact \ref{fact:kronecker} and Proposition \ref{prop:kronecker}, only $m-1$ of the candidates from each department will show up in the bases as constructed.  However, the spanning set of $m$ such vectors is useful, and they sum to zero.  This is even true when $k>1$, if we hold all but one department fixed -- the Kronecker construction cannot undo a dependence relation.
\end{remark}

\begin{example}
If we had done the same construction for $k=2$ in Example \ref{ex:3wr3tensor21}, we would have gotten $4$ points for any committee sharing $j_2$ and $j_3$, $-2$ points for any including just one of them, and $1$ point for those sharing none.
\end{example}

We will have opportunity to use the same type of construction often, so we introduce some notation.

\begin{notation}
Given $n$, we denote an element $\p$ of $R_{m,n}$ such that there exists a committee $C$ where the value of $\p[C']$ depends strictly on the cardinality of $C\cap C'$ by $\p = [p_0;p_1;p_2;\ldots ;p_n]$, where $C$ should be clear by context.
\end{notation}

The previous example then may be notated $[4;-2;1]$.

Now we use the preceding work to get a much more interesting basis for $S^{((n-k),(k),\emptyset,\ldots ,\emptyset)}$.  See \ref{subsec:resultsproofs} for the proof.

\begin{prop}\label{prop:committeek}
Select a specific full committee $C$ and a choice of compatible subcommittees $c_k$ of size $k$ (that is, each $c_k\subset C$ as sets), one for each set of $k$ of the $n$ departments.  Then choose the $\binom{n}{k}$ vectors $b_{c_k}$ from each $S^{(m)^{\otimes^{n-k}}}\otimes S^{(m-1,1)^{\otimes^k}}$ in Proposition \ref{prop:kronecker}.

Their \emph{sum} $b_C$ is a vector in $\mathbb{Q}^{m^n}$ which corresponds to the committee $C$ in the following sense.  For each entry in the vector $b$ corresponding to the committee $C'=(1_{j'_1},2_{j'_2},\ldots ,n_{j'_n})$, its value is determined solely by the cardinality of $C\cap C'$, and these values depend solely on $m,n,k$, not on $C$.
\end{prop}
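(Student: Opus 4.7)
My plan is to compute $b_C[C']$ directly by expanding the defining sum $b_C=\sum_{K} b_{c_k}$ (ranging over $k$-subsets $K$ of the $n$ departments, with $c_k$ the subcommittee obtained by restricting $C$ to the departments in $K$), and then to show the resulting expression depends on $C$ and $C'$ only through $d:=|C\cap C'|$.

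First, I would fix an arbitrary target committee $C'=(1_{j'_1},\ldots,n_{j'_n})$ and let $D\subseteq\{1,\ldots,n\}$ be the set of departments in which $C$ and $C'$ choose the same candidate, so that $|D|=d$. For each $k$-subset $K$, Proposition \ref{prop:kronecker} says $b_{c_k}[C']=(m-1)^{\ell}(-1)^{k-\ell}$, where $\ell$ counts the departments in $K$ at which $C'$ agrees with $c_k$. The key observation is that, because $c_k$ is by construction the restriction of $C$ to $K$, a department $i\in K$ contributes to $\ell$ exactly when $j'_i=j_i$, i.e., when $i\in D$. Hence $\ell=|K\cap D|$.

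Second, I would sum over all $\binom{n}{k}$ choices of $K$ and group the terms by $j=|K\cap D|$, obtaining
$$b_C[C']=\sum_{j=0}^{k}\binom{d}{j}\binom{n-d}{k-j}(m-1)^{j}(-1)^{k-j}.$$
This is manifestly a function of $m,n,k,d$ alone, independent of the particular committee $C$ chosen, which is precisely the content of the proposition.

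There is no real obstacle here: once the identification $\ell=|K\cap D|$ is nailed down, the rest is one-line binomial bookkeeping. The only additional care needed is to observe that each summand $b_{c_k}$ lies in a different copy of $S^{(m)^{\otimes^{n-k}}}\otimes S^{(m-1,1)^{\otimes^k}}$ (one per $K$), so that by Theorem \ref{Correct3.8} the sum $b_C$ automatically sits inside $S^{((n-k),(k),\emptyset,\ldots,\emptyset)}$, justifying the claim that these vectors are genuine elements of the intended irreducible submodule.
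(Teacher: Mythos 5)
Your proof is correct, and it takes a genuinely more direct route than the paper's. The paper establishes the claim by a symmetry argument: given two committees $C'$, $C''$ with $|C\cap C'|=|C\cap C''|$, it explicitly constructs an element $(\sigma;\pi)\in S_m\wr S_n$ fixing $C$ (setwise, department by department) and carrying $C'$ to $C''$, and concludes that the multiset of intersection cardinalities $\{|C'\cap c_k|\}_K$ is invariant, so the sum $\sum_K (m-1)^{f'_K}(-1)^{k-f'_K}$ is too. You instead nail down that multiset directly via the identification $\ell=|K\cap D|$ (which is exactly right, since $c_k$ is the restriction of $C$ to $K$, so agreement of $C'$ with $c_k$ in a department of $K$ is the same as membership in $D$), and then count $k$-subsets by $|K\cap D|$. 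Your route buys something the paper's does not: the closed form
$$b_C[C']=\sum_{j=0}^{k}\binom{d}{j}\binom{n-d}{k-j}(m-1)^{j}(-1)^{k-j},$$
which immediately specializes to Theorem \ref{thm:borda} at $k=1$ (giving $d(m-1)-(n-d)$ with $d$ the number of agreements) and to Proposition \ref{prop:level2component} at $k=2$, whereas the paper derives those separately. What the paper's equivariance argument buys is thematic: it exhibits the invariance as a consequence of the $S_m\wr S_n$-action, which is the organizing principle of the whole construction. One notational caution: you use $d=|C\cap C'|$ for the number of \emph{agreements}, while Theorem \ref{thm:borda} uses $d$ for the number of \emph{disagreements}; your formula is consistent once that translation is made, but you should flag the convention to avoid confusion with the surrounding text.
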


\begin{remark}\label{rmk:structures}
The similarity to the table in Section 4.2 of \cite{CrismanPermuta}, where various pieces of the space of all profiles on linear orders decompose with entries depending only the rank of a candidate in each order, should be evident.  See also \cite{SaariStruct1}, \cite{SaariStruct2}.
\end{remark}

A simple induction argument\footnote{For example, the one by \cite{4453618} at {\tt https://math.stackexchange.com/a/4453618/24113}.} using Remark \ref{rmk:anywilldo} yields the following important result.

\begin{prop}\label{prop:committeespan}
The set of all such $b_C$, including the ones mentioned in Remark \ref{rmk:anywilldo}, spans $S^{((n-k),(k),\emptyset,\ldots,\emptyset)}$.
\end{prop}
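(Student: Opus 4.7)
The plan is to reduce to Proposition \ref{prop:kronecker}. Since $b_C = \sum_{J'} b_{C|_{J'}}$ where $J'$ ranges over all $k$-subsets of $\{1,\ldots,n\}$, the span of the $b_C$'s sits inside the direct sum of the $\binom{n}{k}$ copies of $S^{(m)^{\otimes^{n-k}}}\otimes S^{(m-1,1)^{\otimes^k}}$. I will show that every individual summand $b_{c_k}$---for $c_k$ any subcommittee on any $k$-subset $J$, and using any candidate in each department---already lies in the span of the $b_C$'s. Given this, Proposition \ref{prop:kronecker}, together with Remark \ref{rmk:anywilldo} (which asserts that including the ``extra'' $m$-th candidate-vector per department preserves spanning in each copy), implies that the $b_C$'s span all of $S^{((n-k),(k),\emptyset,\ldots,\emptyset)}$.

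The core calculation is to fix $J$ and $c_k$ and average $b_C$ over all full committees extending $c_k$:
\[
\sum_{C\,:\, C|_J = c_k} b_C \;=\; \sum_{\substack{J'\subseteq\{1,\ldots,n\}\\ |J'|=k}} \;\; \sum_{C\,:\, C|_J = c_k} b_{C|_{J'}}.
\]
For $J' = J$ every summand on the right-hand side equals $b_{c_k}$, contributing $m^{n-k}\,b_{c_k}$. For $J'\neq J$ the set $J'\setminus J$ is nonempty, and the inner sum factors through an unrestricted sum over all $m$ candidates in each department of $J'\setminus J$. By Fact \ref{fact:kronecker}, each such department appears as a single Kronecker factor of $b_{C|_{J'}}$; by Remark \ref{rmk:anywilldo}, summing that factor over all $m$ candidate-vectors yields zero, which annihilates the entire Kronecker product. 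So all $J'\neq J$ contributions vanish, giving
\[
b_{c_k} \;=\; m^{k-n}\sum_{C\,:\, C|_J = c_k} b_C,
\]
exhibiting $b_{c_k}$ as a linear combination of $b_C$'s.

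I expect the ``simple induction'' cited in the footnote to be exactly this peeling-off argument for the $J'\neq J$ terms: each department in $J'\setminus J$ in turn contributes a factor that sums to zero by Remark \ref{rmk:anywilldo}, and iterating kills the whole tensor as soon as $|J'\setminus J|\geq 1$. That vanishing is the only step requiring care; everything else is linearity and an invocation of Proposition \ref{prop:kronecker} to collect the $b_{c_k}$'s into a spanning set of $S^{((n-k),(k),\emptyset,\ldots,\emptyset)}$.
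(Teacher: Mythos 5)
Your proof is correct and rests on the same key fact the paper invokes (Remark \ref{rmk:anywilldo}: the $m$ department vectors sum to zero and this dependence survives the Kronecker construction); the paper itself only gestures at a ``simple induction'' via a footnote, so your averaging identity $b_{c_k}=m^{k-n}\sum_{C:\,C|_J=c_k}b_C$ is a clean, explicit realization of that argument. Isolating each summand $b_{c_k}$ in this way and then invoking Proposition \ref{prop:kronecker} (together with Remark \ref{rmk:anywilldo} for the $m$-th candidate vectors) does yield the claimed span, so nothing is missing.
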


\begin{example}\label{ex:3wr3tensor21rewrite}
Consider the same setup as in Example \ref{ex:3wr3tensor21} ($m,n=3$ and $k=1$).   Using\footnote{If the vectors in Example \ref{ex:3wr3tensor21} are labeled $b_1,b_2,c_1,c_2,d_1,d_2$, then this basis is $b_1+c_1+d_1$, $b_2+c_1+d_1$, $-(b_1+b_2)+c_1+d_1$, $b_1+c_2+d_1$, $b_1+c_2+d_2$, and $b_1-(c_1+c_2)+d_1$, corresponding to the committees $(1_1,2_1,3_1)$, $(1_2,2_1,3_1)$, $(1_3,2_1,3_1)$, $(1_1,2_2,3_1)$, $(1_1,2_2,3_2)$, and $(1_1,2_3,3_1)$.} Proposition \ref{prop:committeespan}, here is a \emph{different} meaningful basis for $S^{((2),(1),\emptyset)}$.  

\scriptsize
$$\left\langle
\left[\begin{array}{c}6\\3\\3\\3\\0\\0\\3\\0\\0\\3\\0\\0\\0\\-3\\-3\\0\\-3\\-3\\3\\0\\0\\0\\-3\\-3\\0\\-3\\-3\end{array}\right],
\left[\begin{array}{c}3\\6\\3\\0\\3\\0\\0\\3\\0\\0\\3\\0\\-3\\0\\-3\\-3\\0\\-3\\0\\3\\0\\-3\\0\\-3\\-3\\0\\-3\end{array}\right],
\left[\begin{array}{c}3\\3\\6\\0\\0\\3\\0\\0\\3\\0\\0\\3\\-3\\-3\\0\\-3\\-3\\0\\0\\0\\3\\-3\\-3\\0\\-3\\-3\\0\end{array}\right],
\left[\begin{array}{c}3\\0\\0\\6\\3\\3\\3\\0\\0\\0\\-3\\-3\\3\\0\\0\\0\\-3\\-3\\0\\-3\\-3\\3\\0\\0\\0\\-3\\-3\end{array}\right],
\left[\begin{array}{c}0\\-3\\-3\\3\\0\\0\\0\\-3\\-3\\3\\0\\0\\6\\3\\3\\3\\0\\0\\0\\-3\\-3\\3\\0\\0\\0\\-3\\-3\end{array}\right],
\left[\begin{array}{c}3\\0\\0\\3\\0\\0\\6\\3\\3\\0\\-3\\-3\\0\\-3\\-3\\3\\0\\0\\0\\-3\\-3\\0\\-3\\-3\\3\\0\\0\end{array}\right]
\right\rangle
\quad\begin{array}{c} (1_1,2_1,3_1) \\(1_1,2_1,3_2) \\(1_1,2_1,3_3)\\ (1_1,2_2,3_1) \\(1_1,2_2,3_3) \\(1_1,2_2,3_3)\\ (1_1,2_3,3_1) \\(1_1,2_3,3_3) \\(1_1,2_3,3_3)\\
 (1_2,2_1,3_1) \\(1_2,2_1,3_2) \\(1_2,2_1,3_3)\\ (1_2,2_2,3_1) \\(1_2,2_2,3_3) \\(1_2,2_2,3_3)\\ (1_2,2_3,3_1) \\(1_2,2_3,3_3) \\(1_2,2_3,3_3)\\
 (1_3,2_1,3_1) \\(1_3,2_1,3_2) \\(1_3,2_1,3_3)\\ (1_3,2_2,3_1) \\(1_3,2_2,3_3) \\(1_3,2_2,3_3)\\ (1_3,2_3,3_1) \\(1_3,2_3,3_3) \\(1_3,2_3,3_3)  \end{array}$$
\normalsize

Note that each vector `awards' $6$ points to a `target' committee, $3$ points to any committee sharing two candidates with this committee, $0$ points to one sharing only one member, and $-3$ points to any committee disjoint to the `target'.  So the vectors above have the form $[6;3;0;-3]$.

\end{example}

\subsection{The Committee Subspaces}
In Example \ref{ex:3wr3tensor21rewrite}, we saw that $S^{((2),(1),\emptyset)}$ has a particularly nice basis, where the entries of a given vector seem to be a linear function of how many candidates are shared.  This is not a coincidence.

\begin{theorem}\label{thm:borda}
In Proposition \ref{prop:committeespan}, let $b_C$ be the vote/points vector corresponding to a committee $C$.  Suppose also $k=1$ (so, we are choosing just one department at a time).  The values of the entry for a committee $C'$ in $b_C$ is $(m-1)(n-d)-d=mn-n-md$ precisely when the committees \emph{disagree} in $|C\cap C'|=d$ departments, or $[mn-n;mn-n-m;\ldots ;-n]$.
\end{theorem}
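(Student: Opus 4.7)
The plan is to apply Proposition \ref{prop:committeek} directly in the case $k=1$ and read off what the resulting vector $b_C$ assigns to an arbitrary committee $C'$. Since $k=1$, the $\binom{n}{k} = n$ isomorphic copies of $S^{(m)^{\otimes^{n-k}}}\otimes S^{(m-1,1)^{\otimes^k}}$ correspond to the choice of which single department carries the nontrivial $S^{(m-1,1)}$ factor, and each compatible subcommittee $c_1$ is forced to be $\{i_{j_i}\}$ for some $i$, where $C=(1_{j_1},\ldots,n_{j_n})$.

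For a fixed department $i$, Proposition \ref{prop:kronecker} tells us that the Kronecker-built basis vector $b_{c_1}$ in that copy has entry $(m-1)^\ell(-1)^{k-\ell}(1)^{n-k}$ at $C'$, where $\ell\in\{0,1\}$ records agreement with $c_1$ in the one distinguished department. Explicitly, this is $m-1$ when $j'_i = j_i$ and $-1$ when $j'_i \neq j_i$; the remaining $n-1$ trivial factors contribute $1$.

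Summing the $n$ such vectors over all departments, as prescribed by Proposition \ref{prop:committeek}, one gets
\[
b_C(C') = (n-d)(m-1) + d(-1) = (m-1)(n-d) - d = mn - n - md,
\]
where $d = n - |C \cap C'|$ is the number of departments in which $C'$ disagrees with $C$. Evaluating at $d = 0, 1, \ldots, n$ recovers the claimed profile $[mn-n;\, mn-n-m;\, \ldots;\, -n]$.

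There is essentially no conceptual obstacle here; the only care needed is in reading off the right exponents from Proposition \ref{prop:kronecker} and checking that each of the $n$ summands' dependence on $C'$ is through a \emph{single} coordinate, so that the total depends only on the Hamming-style count $d$. Because of this, the entire argument is a short bookkeeping calculation rather than a new construction.
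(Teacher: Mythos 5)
Your proof is correct and is essentially the paper's own argument, just spelled out in more detail: the paper likewise observes that each of the $n$ departmental vectors forming $b_C$ contributes $m-1$ at $C'$ for an agreeing department and $-1$ for a disagreeing one, giving $(m-1)(n-d)-d$. You also correctly read $d$ as the number of \emph{disagreeing} departments (so $|C\cap C'|=n-d$), which matches the intent of the statement despite its slightly garbled phrasing.
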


\begin{example}
When considering five candidates each in three departments ($m=5,n=3$), the values for the four possibilities would be twelve ($d=0$ or $C=C'$), seven, two, and negative three ($C$ disjoint from $C'$) points.
\end{example}

\begin{proof}[Proof of Theorem \ref{thm:borda}]
Each committee $C'$ disagrees in $d$ departments.  In that case, the vectors used to form $b_C$ will have value $-1$ in the vectors corresponding to those $d$ departments, and value $m-1$ in the other $n-d$ vectors.
\end{proof}

\begin{definition}\label{defn:Borda}
We call the $S^{((n-1),(1),\emptyset,\ldots \emptyset)}$ subspace the \emph{Borda subspace}, by analogy with the Borda subspace in \cite{SaariStruct1} (including both the algebraic structure and weights).  
\end{definition}

So Theorem \ref{thm:borda} gives a good basis for the Borda subspace.

\begin{fact}\label{fact:otherbordabasis}
There is another basis of potential interest.  Since the Borda space is a direct sum of $n$ spaces isomorphic (as $S_m$-modules) to $S^{(m-1,1)}$, call the one corresponding to the $d$th department $B_d$.  We can use Fact \ref{fact:kronecker} instead with the basis $$\left\langle  \left[\begin{array}{c}1\\-1\\0\\ \vdots\\0\end{array}\right], \left[\begin{array}{c}1\\0\\-1\\ \vdots\\0\end{array}\right], \cdots ,\left[\begin{array}{c}1\\0\\ \vdots\\0\\-1\end{array}\right]\right\rangle\; .$$   This gives vectors $v_{d,i}$ such that, for a given $d$, with entry $1$ for any committee including $d_1$, entry $-1$ including $d_i$, and zero otherwise.
\end{fact}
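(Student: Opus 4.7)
The plan is to verify the claim in two stages: first show that the displayed vectors form a legitimate basis for a single copy of $S^{(m-1,1)}$, then trace through the Kronecker construction of Fact \ref{fact:kronecker} to read off the entries of the resulting $v_{d,i}$.

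For the first stage, note that each of the $m-1$ listed vectors has coordinate sum $0$, so it lies in the standard complement $S^{(m-1,1)}$ of the trivial subspace $S^{(m)}$ inside $\mathcal{N}$. Linear independence is immediate: the $i$th vector is the only one among them with a nonzero entry in coordinate $i+1$. Since $\dim S^{(m-1,1)}=m-1$, this is a basis. (It is the standard ``simple reflection'' basis for the type $A_{m-1}$ root system, which is an alternative to the ``first $m-1$ fundamental-type vectors'' chosen earlier.)

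For the second stage, apply Proposition \ref{prop:kronecker} with $k=1$ and the chosen department equal to $d$, but now using the new basis. The Kronecker product assigns $(1,1,\ldots,1)^T$ to each of the $n-1$ trivial factors and the vector $(1,0,\ldots,0,-1,0,\ldots,0)^T$ (with the $-1$ in slot $i$) to the $S^{(m-1,1)}$ factor attached to department $d$. Because each trivial factor contributes $1$ regardless of which candidate a committee $C'$ selects from the corresponding department, the entry of the Kronecker product at $C'$ reduces to the value of the $S^{(m-1,1)}$ factor at $C'$'s choice in department $d$: this is $1$ if $C'$ includes $d_1$, $-1$ if $C'$ includes $d_i$, and $0$ otherwise. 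That matches the described $v_{d,i}$.

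Finally, to know $\{v_{d,i}\}_{i=2,\ldots,m}$ spans $B_d$ (not just sits inside it), observe that these are $m-1$ vectors in a space of dimension $m-1$, and that they are linearly independent since $v_{d,i}$ is the unique element of the list with nonzero entries on committees containing $d_i$. The only step requiring even mild care is keeping track of which coordinate indexes which department within the Kronecker product, but this is pinned down by the lexicographic ordering convention fixed at the start of Section \ref{sec:results}, so no real obstacle arises.
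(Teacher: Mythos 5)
Your verification is correct and is essentially the computation the paper intends: the paper states this as a Fact without a written proof, relying implicitly on exactly the Kronecker-product mechanism of Fact \ref{fact:kronecker} and Proposition \ref{prop:kronecker} that you spell out. Both your linear-independence check for the sum-zero vectors and your reading of the product entry (trivial factors contribute $1$, so the value at $C'$ is just the department-$d$ factor evaluated at $C'$'s choice there) are exactly right.
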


The Borda space (by Theorem \ref{thm:borda}) expresses, along with the trivial space, the set of all possible points values/profiles with \emph{linear} relationships between the numbers and number of disagreements between committees.

Likewise, each of the $S^{((n-k),(k),\emptyset,\ldots \emptyset)}$ irreducible subspaces in the decomposition of the committee space corresponds to a particular size ($k$) subcommittee in Proposition \ref{prop:kronecker}.  Some of these spaces are useful to delineate further.

\begin{definition}\label{defn:sign}
We call the $S^{(\emptyset,(n),\emptyset,\ldots \emptyset)}$ subspace the \emph{sign subspace}, since the values in Proposition \ref{prop:kronecker} are largely determined by the parity of $d$ the number of departments that disagree with a target committee, $(m-1)^\ell (-1)^{k-\ell}=(-1)^d (m-1)^{n-d}$.
\end{definition}

The sign subspace really does just give sign/parity in the important case where there are only two candidates per department.

\begin{prop}\label{prop:level2component}
For the $S^{((n-2),(2),\emptyset,\ldots ,\emptyset)}$ component, the weights in a basis vector from Proposition \ref{prop:committeespan} for a committee disagreeing in $d$ spots from a target committee is $$\frac{m^2}{2}(d^2-d)-m(m-1)(n-1)d+\binom{n}{2}(m-1)^2$$ or $$\frac{m^2}{2}d^2+\left(-m(m-1)(n-1)-\frac{m^2}{2}\right)d+\frac{n(n-1)}{2}(m-1)^2\, .$$
\end{prop}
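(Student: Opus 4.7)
The plan is to apply Proposition \ref{prop:committeespan} with $k=2$, then count the contributions of each of the $\binom{n}{2}$ pairs of departments according to how the pair intersects the disagreement set, and finally collect terms as a polynomial in $d$.

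Following the convention made implicit in Theorem \ref{thm:borda}, let $d$ be the number of departments where $C'$ disagrees with the target committee $C$. By Proposition \ref{prop:committeespan}, $b_C[C']$ is the sum, over all pairs $\{i,j\}$ of departments, of the $C'$-entry of the Kronecker basis vector $b_{c_2}$ for the compatible subcommittee $c_2 = \{i_{j_i}, j_{j_j}\} \subset C$ on those two departments. Proposition \ref{prop:kronecker} tells me this entry equals $(m-1)^{\ell}(-1)^{2-\ell}$, where $\ell \in \{0,1,2\}$ is the number of departments in $\{i,j\}$ on which $C'$ agrees with $C$.

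The key step is a combinatorial pair-count. The number of pairs contained entirely in the agreement set is $\binom{n-d}{2}$, each contributing $(m-1)^2$; the number mixing one agreement and one disagreement is $(n-d)d$, each contributing $-(m-1)$; and $\binom{d}{2}$ pairs lie entirely in the disagreement set, each contributing $+1$. Summing yields
\[
b_C[C'] \;=\; \binom{n-d}{2}(m-1)^2 \;-\; (n-d)d\,(m-1) \;+\; \binom{d}{2}.
\]

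All that remains is to expand and collect powers of $d$. I expect no real obstacle here: the coefficient of $d^2$ collapses pleasantly to $\tfrac{(m-1)^2 + 2(m-1) + 1}{2} = \tfrac{m^2}{2}$ (using $(m-1)^2 + 2(m-1) + 1 = m^2$), the constant term is $\tfrac{(m-1)^2 n(n-1)}{2} = \binom{n}{2}(m-1)^2$, and the linear coefficient simplifies after routine algebra to $-m(m-1)(n-1) - \tfrac{m^2}{2}$. Sanity checks at the endpoints $d=0$ (where the formula should reduce to $\binom{n}{2}(m-1)^2$) and $d=n$ (where it should reduce to $\binom{n}{2}$, since only the $\binom{d}{2}$ term survives) provide quick guards against sign or counting errors. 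The main conceptual work is identifying the correct partition of pairs by agreement/disagreement type; the rest is bookkeeping.
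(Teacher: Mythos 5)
Your proof is correct, and it reaches the formula by a genuinely different (and arguably cleaner) route than the paper. You classify all $\binom{n}{2}$ pairs of departments directly by how they meet the disagreement set, giving the closed form
\[
b_C[C'] = \binom{n-d}{2}(m-1)^2 - (n-d)\,d\,(m-1) + \binom{d}{2},
\]
and then expand; I checked that the linear coefficient does collapse to $-m(m-1)(n-1)-\tfrac{m^2}{2}$ as you claim, and your formula reproduces the paper's example $[48;8;-7;3]$ for $m=5$, $n=3$. The paper instead computes the constant term, the first difference at $d=1$ (each of the $n-1$ newly half-disagreeing pairs drops by $m(m-1)$), and the constant second difference $m^2$ (each increment of $d$ converts $n-d$ more pairs from full agreement to half disagreement and one pair from half to full disagreement), then assembles the quadratic via Newton's forward-difference interpolation $f(0)+\Delta f(0)\,d+\tfrac{\Delta^2 f(0)}{2}d(d-1)$ --- which is why the statement is phrased with the $d(d-1)$ term. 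Both arguments rest on the same input from Propositions \ref{prop:kronecker} and \ref{prop:committeespan} (each pair contributes $(m-1)^{\ell}(-1)^{2-\ell}$); yours buys an exact closed form before any expansion and transparent endpoint checks at $d=0$ and $d=n$, while the paper's finite-difference bookkeeping avoids expanding $\binom{n-d}{2}$ and explains the shape of the stated answer. The only soft spot in your write-up is that the linear coefficient is asserted rather than shown, but it is a two-line verification and nothing hinges on it.
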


See \ref{subsec:resultsproofs} for the proof.  This component, roughly speaking, gives strong weight to committees either sharing \emph{or} not sharing many departments in common with a target committee. (The `roughly' is because there are so many more committees that do \emph{not} share candidates once $m$ is greater than two.)

\begin{example}
When considering five candidates each in three departments ($m=5,n=3$), the values would be $48$ ($d=0$ or $C=C'$), $8$, $-7$, and $3$, yielding $[48;8;-7;3]$.
\end{example}


%

The point of all this computation, of course, is to find structure in profiles of voters that we might not suspect.  We can use that in Section \ref{sec:ballot}.

\begin{example}
Pick a committee $C$ where $m=n=3$.  Consider a profile with eight voters preferring $C$, but two voters for each committee disjoint from $C$ and one voter for each committee sharing only one committee member with $C$.  Then this profile has \emph{no} Borda component, which is to say that this voter profile is orthogonal to \emph{all} linear relationships involving how many candidates are shared with a given committee.
\end{example}

\section{The Committee as Ballot}\label{sec:ballot}

In situations with a fair amount of structure in the set of objects voted on, it is reasonable to ask for voters to use a \emph{plurality ballot} rather than to rank all possible outcomes, or to disaggregate in other ways.  While in Section \ref{sec:systems} we will look at more complex systems, in the current section we will focus on voting systems which have votes that are simply elements of $\mathcal{C}_{m,n}$.

\begin{definition}\label{defn:points-based-general}
Given $m,n$ let $s:\mathcal{C}_{m,n}\times \mathcal{C}_{m,n}\to \mathbb{Q}$ be a \emph{ballot-scoring function}, which we interpret as recording how many points a vote for a given committee would give to another committee.  Then a \emph{points-based voting rule} $f$ is a function from $R_{m,n}$ to the power set of $\mathcal{C}_{m,n}$, or (by abuse of notation) $R_{m,n}\to R_{m,n}$, such that, for a profile $\p\in R_{m,n}$, $f(\p)$ is the (set of) committee(s) $c$ which maximizes $\sum_{c'\in \mathcal{C}_{m,n}} \p[c']s(c',c)$.
\end{definition}

\begin{definition}\label{defn:committeeneutral}
Further, we have already seen an action of $S_m\wr S_n$ on $\mathcal{C}_{m,n}$ and $R_{m,n}$.  Then rule $f$ is \emph{committee neutral}, or \emph{neutral}, if $s$ is $S_m\wr S_n$-invariant, in the sense that for any $(\sigma;\pi)\in S_m\wr S_n$, $s(c',c)=s((\sigma;\pi)(c'),(\sigma;\pi)(c))$.
\end{definition}

Note that this is \emph{not} the same as the usual neutrality for plurality ballots, where all ballots are treated equally.  We want to take the committee structure explicitly into account, and only require neutrality among department names and individual candidates within each department.

In Section \ref{sec:results}, we saw that $R_{m,n}$ has a specific structure as a representation of $S_m\wr S_n$.  Given $(\sigma;\pi)\in S_m\wr S_n$ and $c\in \mathcal{C}_{m,n}$, if $s$ comes from a neutral rule, then
$$(\sigma;\pi)\left(\sum_{c'\in \mathcal{C}_{m,n}}\p[c'] s(c',c)\right)=\sum_{c'\in \mathcal{C}_{m,n}}\p[((\sigma;\pi)^{-1}(c')] s((\sigma;\pi))^{-1}(c'),(\sigma;\pi)^{-1}(c))$$ so a neutral points-based voting rule may be thought of as a $\mathbb{Q}S_m\wr S_n$-module homomorphism.  Thus the following applies:

\begin{schur}
If $M$ and $N$ are irreducible $\mathbb{Q}S_m\wr S_n$-modules and $g:M\to N$ is a $\mathbb{Q}S_m\wr S_n$-module homomorphism, then either $g=0$ or $g$ is an isomorphism (\cite{Sagan} Theorem 1.6.5) given by multiplication by an element of $\mathbb{Q}$ (\cite{JamesKerber} 4.4.9).
\end{schur}

\begin{prop}\label{prop:neutralpointsR}
Any neutral points-based voting rule on $\mathcal{C}_{m,n}$ is equivalent to a linear map $$\bigoplus_{k=0}^n S^{((n-k),(k),\emptyset,\ldots,\emptyset)}\to \bigoplus_{k=0}^n S^{((n-k),(k),\emptyset,\ldots,\emptyset)}$$ given by multiplication by a scalar for each $S^{((n-k),(k),\emptyset,\ldots,\emptyset)}$.

In other words, any points-based rule for $n$ departments is completely determined by $n+1$ constants.
\end{prop}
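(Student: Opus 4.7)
The plan is to combine Schur's Lemma with the decomposition of $R_{m,n}$ from Theorem \ref{Correct3.8}. The discussion immediately preceding the proposition has already reduced the problem: a neutral points-based rule is naturally a $\mathbb{Q}S_m\wr S_n$-module endomorphism $f \colon R_{m,n} \to R_{m,n}$. I would open by citing that reduction and move quickly to the structural argument.

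First I would invoke Theorem \ref{Correct3.8} to write both the source and target of $f$ as $\bigoplus_{k=0}^n S^{((n-k),(k),\emptyset,\ldots,\emptyset)}$, a direct sum of simple $\mathbb{Q}S_m\wr S_n$-modules. The crucial observation is that these $n+1$ summands are indexed by \emph{distinct} partition-of-partition labels, hence are pairwise non-isomorphic as irreducibles; this follows from the classification of wreath-product representations recorded in Subsection \ref{subsection:reps} via \cite{JamesKerber}.

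Next, I would decompose $f$ using the canonical inclusions $\iota_k$ and projections $\pi_j$ of the summands: each component $\pi_j \circ f \circ \iota_k$ is a homomorphism between two simple modules. Schur's Lemma (as stated above) immediately gives that this component is zero whenever $j \neq k$, and is multiplication by some scalar $\lambda_k \in \mathbb{Q}$ whenever $j = k$. Hence $f$ is block-diagonal with respect to this decomposition, acting as $\lambda_k \cdot \mathrm{id}$ on the $k$th summand, and so $f$ is completely determined by the tuple $(\lambda_0, \lambda_1, \ldots, \lambda_n)$ of $n+1$ rationals.

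The main load-bearing fact is the pairwise non-isomorphism of the $n+1$ summands; once that is in hand, Schur's Lemma does essentially all of the work. The argument is really an instance of the general principle that the endomorphism algebra of a multiplicity-free semisimple module is a product of copies of the base field, one copy per isotypic component.
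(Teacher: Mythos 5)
Your argument is correct and is exactly the route the paper takes: the preceding discussion identifies a neutral rule with a $\mathbb{Q}S_m\wr S_n$-module endomorphism of $R_{m,n}$, Theorem \ref{Correct3.8} gives the multiplicity-free decomposition into $n+1$ pairwise non-isomorphic simples, and the stated form of Schur's Lemma (with endomorphisms given by scalars in $\mathbb{Q}$, per \cite{JamesKerber} 4.4.9) kills the off-diagonal components and reduces the diagonal ones to scalars. The paper leaves this as an immediate consequence rather than writing it out, so your proposal simply makes the same reasoning explicit.
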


\begin{remark}
Effectively there are only $n-1$ parameters, if we only look at point differentials (killing the trivial $S^{((n),\emptyset ,\ldots \emptyset)}$) and ignore a scaling factor.
\end{remark}

\begin{remark}
By the same principle of `the acted upon has become the actor' in \cite{OrrisonSymmetry} and \cite{BarceloEtAl}, the weighting vectors also decompose in the same way as profiles in $R_{m,n}$.  More precisely, we can use the same notation $[a_0;a_1;\ldots ;a_n]$ for weights on committees disagreeing in $i$ spots, and we are guaranteed that the weights $[0;0;\ldots ;0;1;0;\ldots ;0]$ which considers \emph{only} committees that disagree in exactly $k$ spots will have the same decomposition as in the $b_C$ weights in Proposition \ref{prop:committeek}.
\end{remark}


Now we can talk about properties of such voting systems.  The first is essentially a dot product computation using Theorem \ref{thm:borda}.

\begin{prop}
Suppose we use a Borda-like set of points, where a committee sharing $k$ members with the chosen committee $C$ is given $k$ points, or in general $[n;n-1;\ldots ;1;0]$.    Then the parameters involved in the method are all zero, except $a_0=n m^{n-1}$ and $a_1=m^{n-1}$.  

That is to say, a Borda-like (linearly weighted) procedure ignores any profile information that is not in the Borda component or which is trivial, and it amplifies the Borda component.
\end{prop}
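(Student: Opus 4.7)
The plan is to invoke Schur's Lemma exactly as in Proposition \ref{prop:neutralpointsR}: the Borda-like rule $T$ acts on each irreducible summand $S^{((n-k),(k),\emptyset,\ldots,\emptyset)}$ of $R_{m,n}$ as multiplication by a scalar $a_k$, so it suffices to apply $T$ to one convenient test vector from each summand and read off the scalar. Since the weight $n-d$ assigned to a committee $c'$ at disagreement $d$ from the chosen committee is exactly $|c \cap c'|$, the rule has the explicit form
$$T(\p)[c] \;=\; \sum_{c' \in \mathcal{C}_{m,n}} \p[c']\, |c \cap c'| \;=\; \sum_{\ell=1}^{n} \sum_{c'} \p[c']\, [c'_\ell = c_\ell].$$
The strategy is then to choose test vectors with Kronecker (tensor) structure, so that for each $\ell$ the inner sum factors across the $n$ departments and can be evaluated by inspection.

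For the trivial component $S^{((n),\emptyset,\ldots)}$, one computes $T(\mathbf{1})[c] = \sum_{\ell=1}^n |\{c' : c'_\ell = c_\ell\}| = n m^{n-1}$, so $a_0 = n m^{n-1}$. For the Borda component $S^{((n-1),(1),\emptyset,\ldots)}$, I test on the basis vector $v$ of Fact \ref{fact:otherbordabasis} given by $v[c'] = [c'_d = 1] - [c'_d = j]$ for a fixed department $d$. In the factorization over departments, every index $\ell \neq d$ contributes the vanishing factor $\sum_{c'_d}([c'_d = 1] - [c'_d = j]) = 0$, while the single $\ell = d$ term collapses to $m^{n-1} v[c]$. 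Hence $T(v) = m^{n-1} v$ and $a_1 = m^{n-1}$.

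For $k \geq 2$, I test on a Kronecker-product vector $v[c'] = \prod_{j=1}^{k} ([c'_j = 1] - [c'_j = 2])$ lying in one of the isomorphic summands $S^{(m)^{\otimes(n-k)}} \otimes S^{(m-1,1)^{\otimes k}}$ of $S^{((n-k),(k),\emptyset,\ldots)}$ constructed in Proposition \ref{prop:kronecker}. In expanding $T(v)[c]$ by the formula above, each choice of index $\ell$ can \emph{absorb} at most one of the $k$ difference factors into the indicator $[c'_\ell = c_\ell]$; since $k \geq 2$, at least one of the remaining difference factors contributes $\sum_{c'_j}([c'_j = 1] - [c'_j = 2]) = 0$. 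Thus every $\ell$-summand vanishes, $T(v) = 0$, and $a_k = 0$ for all $k \geq 2$.

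The main obstacle is the last step: one must verify that the Kronecker-product test vector genuinely lies in the isotypic component $S^{((n-k),(k),\emptyset,\ldots)}$ (so that Schur's Lemma really pins down a single scalar) and then carry out the sum-product interchange cleanly. Both are handled by the construction of Proposition \ref{prop:kronecker} together with the key combinatorial observation that $|c \cap c'|$ decomposes as a sum of \emph{single}-department agreement indicators, so $T$ cannot see the multi-department correlations that distinguish the higher components $k \geq 2$.
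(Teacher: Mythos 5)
Your proof is correct. It rests on the same foundation the paper intends --- Proposition \ref{prop:neutralpointsR} (Schur's Lemma) reduces everything to computing one scalar per component $S^{((n-k),(k),\emptyset,\ldots,\emptyset)}$, and since $R_{m,n}$ is multiplicity-free each scalar is read off from a single well-chosen nonzero test vector --- but your mechanism for extracting the scalars differs from the paper's. The paper treats this as ``a dot product computation using Theorem \ref{thm:borda},'' i.e.\ it decomposes the weighting vector $[n;n-1;\ldots;0]$, regarded as an element of $R_{m,n}$ centered at $C$, against the explicit bases of the components (the entries being an affine function of the disagreement count $d$, the vector lies in the trivial-plus-Borda subspace by Theorem \ref{thm:borda}, and the coefficients $nm^{n-1}$ and $m^{n-1}$ fall out of the dot products). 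You instead apply the operator to test vectors and exploit the identity $|c\cap c'|=\sum_{\ell=1}^{n}[c'_\ell=c_\ell]$, so that each $\ell$-summand factors over departments and any tensor factor of type $S^{(m-1,1)}$ not ``protected'' by the indicator at $\ell$ sums to zero. This buys you something the paper's route does not make explicit: the vanishing of $a_k$ for $k\geq 2$ is immediate and conceptual (the Borda weight only sees single-department agreements, so it is blind to multi-department correlations), with no need to compute the higher-$k$ bases of Proposition \ref{prop:kronecker} at all; your identifications of the test vectors as lying in the correct summands are justified exactly as you say, by Fact \ref{fact:otherbordabasis} and the Kronecker construction. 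The two computations are essentially adjoint to one another, and your numbers $a_0=nm^{n-1}$, $a_1=m^{n-1}$ check out.
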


\begin{example}
With a $S_2\wr S_2$ vote for $(1_1,2_1)$ a Borda-like procedure would have weighting vector $[2,1,1,0]^T = [2;1;0]$.

With $S_3\wr S_2$ a vote for $(1_1,2_1,3_1)$ would give points (in the lexicographic order) as per the vector $[2,1,1,1,0,0,1,0,0]^T=[2;1;0]$ as well.
\end{example}

Given that the Borda space only has dimension $n(m-1)$, just as in regular voting theory, a Borda-like set of weights (intentionally) ignores much of the information in a profile.  If, in designing a system, you believe that the influence of a voter on the score of a particular committee should increase predictably with its intersection with the voter's favorite committee, this is a powerful feature.

This means, of course, that \emph{any system which is not Borda-like} will be emphasizing also some of the other components.   For instance, in the $S_2\wr S_2$ case, any non-linear set of weights will `stretch' the subspace generated by $[1,-1,-1,1]^T$, which favors a committee \emph{and} its opposite over against others.  

\begin{example}
Consider the voting rule in which a voter is assumed to approve of any committee not completely disjoint from their favorite committee\footnote{This may be seen as related to the `representation-focused' scoring functions of \cite{ElkindEtAlMultiwinner}, but in our structured setting.}.  This would have weights such as $[1;1;\ldots ;1;0]$.

But this is the trivial rule (every committee gets a point per voter, regardless of profile) minus the rule giving points only to completely disjoint committees, which corresponds to the sign subspace in Definition \ref{defn:sign}.  So a direct computation shows this ostensibly innocuous rule decomposes into having parameters
$$[m^n - (m-1)^n;(m-1)^{n-1};-(m-1)^{n-2};\ldots \pm 1]$$ for the various $S^{((n-k),(k),\emptyset,\ldots ,\emptyset)}$ subspaces. 

To see how this impacts the symmetries of this system, consider when we have $m=n=4$ departments and candidates.  The $k=2$ component (with typical basis profile $[54;18;-2;-6;6]$) is thus \emph{reversed} in impact by one third as much as the Borda ($[12;8;4;0;-4]$) subspace is confirmed.  So even a very clear $[66;26;2;-6;2]$ sum of these profiles in favor of a particular committee would end up with scores (up to constant) $[-18;6;14;6;-18]$.

In favor of this rule is that the huge tie among all committees agreeing in precisely two departments with the favored committee is indeed a representability result, given that a slight plurality of voters ($78$ versus $66$) desired a committee differing slightly from the most-favored one.  On the other hand, the final scores put this committee in clear last place along with all totally disjoint committees, so the society had better really value compromise if it uses this rule.
\end{example}

Depending on the preferences of the designer, emphasizing components other than Borda might even be an explicitly desired feature!

\begin{example}\label{ex:wine}
Consider the case of selecting ingredients for a meal where there is a strong, justified, a priori belief that the ingredients are `naturally' paired with other ingredients for best taste.  (Perhaps this is the case with wines and meats.)  In that case, different `voters' might have different views as to what the best pairing they desire for dinner tonight would be, but the worst possible outcome would be to have a vote that intends to \emph{mix} these pairs.  

In this scenario, a weighting vector like $[1;-1;1]$ might accurately reflect the `electorate' and its preferences \emph{about} procedures, without forcing anyone to submit a full ranking.  Or use an equivalent (in outcomes via the argmax) non-sum-zero weighting vector $[1;0;1]$; its parameters are $a_0=2$, $a_1=0$, and $a_2=2$.
\end{example}

\begin{example}
Suppose more generally that $m=2$.  Suppose we use the weighting vector $[1;0;\ldots ;0;1]$ which allocates points from a voter for $C$ to $C$ itself and to its complementary committee (which makes sense as there are two options for each department).

The surprise is that it is \emph{not} just $a_0$ and $a_n$ which are nonzero, but \emph{all even} $a_{2k}=2$ and all odd parameters are zero.  So not only is the Borda component killed, but so are all components which fix an odd number of committees to agree with -- while the components corresponding to even-cardinality subcommittees are essentially kept unchanged!
\end{example}

\begin{proof}
There is an action of $\mathbb{Z}_2$ on $R_{2,n}$ given by sending a committee to its complement (and extending to the vector space).  Since the irreducible spaces are formed by taking tensor product powers of $$\mathcal{N}\simeq \left\langle\left[ \begin{array}{c}1\\1\end{array}\right]\right\rangle\oplus \left\langle\left[ \begin{array}{c}1\\-1\end{array}\right]\right\rangle$$ we see that the odd powers would have to be multiplied by $-1$ if they survived such an action (and hence they do not).  Further, since this set of weights gives one point to a committee \emph{and} its complement, that is exactly twice as big as the original vectors coming from $\left[ \begin{array}{c}1\\1\end{array}\right]$.
\end{proof}

\begin{example}
There are other parity criteria one could imagine.  When $m=2$, we could try the weights $[1;0;1;0;\ldots]$ where we only give points to committees which disagree with the most-preferred committee in an \emph{even} number of departments.  (This could be to structurally embed in the system a value for having \emph{pairs} of representatives being valued, but where any pair would do.)  A similar argument to the previous example shows that this kills all but the sign and trivial components, as one would expect for $m=2$.
\end{example}

\begin{example}
More interestingly, when $m=4$ something similar occurs.  Normalize to avoid a large $a_0$ component and consider the weights $[1;-1;1;\ldots ;\pm 1]$.  The points-based rule associated to these weights does not kill \emph{any} components, and instead up to a scaling factor alternates preserving exactly and inverting each component coming from an even or odd number of \emph{disagreements} ($a_i=(-1)^{i+n}2^n$).  

That is, if we give points based on the parity of the agreement of a committee with a vote with an even number of committees, we really scale up the components \emph{least} directly connected to individual members of the committees.  
\end{example}

As always, the real message is if we use weights which have a non-zero dot product with this set of weights, we get some of this same (undesired?) behavior.

\begin{proof}
For each committee, we know a basis element for $S^{((n-k),(k),\emptyset,\ldots ,\emptyset)}$ comes from Kronecker products of vectors of the form $[1,1,1,1]^T$ and $[3,-1,-1,-1]^T$.  At the department level, then, the specified committee from a given basis vector will receive $1-1-1-1=-2=1\cdot -2$ points from each agreeing department and $3+1+1+1=6=3\cdot 2$ from each disagreeing department.  Multiplying these together and recalling that Proposition \ref{prop:kronecker} does the same with $3$ and $-1$ means that this propagates to $S^{((n-k),(k),\emptyset,\ldots ,\emptyset)}$, with an extra factor of $2$ for each department and an extra factor of $-1$ for each agreeing department.
\end{proof}

That this happens for $m=4$, $n=2$ is very specific to the number-theoretic fact that $(m-1)n$ is an exact multiple of $2-m$, and does \emph{not} easily generalize.  There are many other combinatorial or number-theoretic facts about which components different procedures emphasize, but we have tried to emphasize some with more evident voting interest.

\section{The Profile Space}\label{sec:profile}
The papers \cite{BarceloEtAl} and \cite{LeeThesis} analyze the suggestion of \cite{RatliffPublicChoice} that each voter submit a \emph{full ranking} of all possible structured committees.  Since there are $m^n$ possible committees, there are $\left(m^n\right)!$ possible rankings!  But for artificial intelligence applications, having agents with one of this gargantuan number of strict rankings may be a reasonable assumption.

Section \ref{sec:systems} will discuss the amazing variety of voting rules of this type that can be devised, and why this variety might be of use.  However, in this section we first examine the space of all profiles and tighten up some of the results from \cite{BarceloEtAl}, in order to understand this space deeply.

\begin{definition}
The finite-dimensional $\mathbb{Q}$-vector space with a natural basis indexed by strict rankings $\mathcal{L}\left(\mathcal{C}_{m,n}\right)$ is called the \emph{profile space} $P_{m,n}$, or just $P$ if $m,n$ are clear from context.
\end{definition}

Because the $S_m\wr S_n$-action on committees permutes them, $P$ will also have a natural action by the wreath product.  Recall from Subsection \ref{subsection:reps} that the irreducible submodules under this action are indexed by partitions of partitions, $S^{\boldsymbol{\lambda}}$. 

\begin{prop}[Theorems 3.7 and 3.1 of \cite{BarceloEtAl}]\label{prop:barceloprofiledecomp}
The profile space $P_{m,n}$ has the following decomposition as irreducible $S_m\wr S_n$-modules:
$$P_{m,n} = \bigoplus_{\frac{(m^n)!}{(m!)^n n!}}\bigoplus_{\lambda\text{ irr.}}(S^{\lambda})^{\oplus \text{ dim }(S^{\lambda)}}$$
In particular, $P$ decomposes as one copy of the regular representation of $S_m\wr S_n$ for each \emph{orbit} of the set of strict rankings under the $S_m\wr S_n$-action.
\end{prop}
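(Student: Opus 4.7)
The plan is to reduce the proposition to a statement about the orbit structure of $S_m\wr S_n$ acting on the set of strict rankings $\mathcal{L}(\mathcal{C}_{m,n})$, and then to invoke the standard decomposition of the regular representation. The key reduction is that a transitive $G$-set on which $G$ acts freely gives rise to a permutation representation isomorphic to the regular representation $\mathbb{Q}G$, and $\mathbb{Q}G \simeq \bigoplus_\lambda (S^\lambda)^{\oplus \dim S^\lambda}$ for any finite group.

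The crux is thus to show that the $S_m\wr S_n$-action on $\mathcal{L}(\mathcal{C}_{m,n})$ is \emph{free}. First I would observe that if $(\sigma;\pi)\in S_m\wr S_n$ fixes a strict ranking $L\in\mathcal{L}(\mathcal{C}_{m,n})$, then because no two committees share a rank in $L$, the element $(\sigma;\pi)$ must fix \emph{each} committee individually. So it suffices to check that the only element fixing every committee is the identity. Using the explicit action formula recalled in Subsection~\ref{subsection:reps}, if $(\sigma;\pi)(1_{j_1},\ldots,n_{j_n})=(1_{j_1},\ldots,n_{j_n})$ for every choice of $(j_1,\ldots,j_n)$, then for each department $i$ we need $\sigma_i(j_{\pi^{-1}(i)})=j_i$ for all $(j_1,\ldots,j_n)$. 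Assuming $m\geq 2$, varying $j_i$ while holding $j_{\pi^{-1}(i)}$ fixed forces $\pi^{-1}(i)=i$; hence $\pi=e$, and then $\sigma_i(j_i)=j_i$ for all $j_i$ gives $\sigma_i=e$ for every $i$. So the stabilizer of any strict ranking is trivial.

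Next, because the action is free, each orbit in $\mathcal{L}(\mathcal{C}_{m,n})$ has exactly $|S_m\wr S_n|=(m!)^n n!$ elements, so the number of orbits is
$$\frac{|\mathcal{L}(\mathcal{C}_{m,n})|}{|S_m\wr S_n|} = \frac{(m^n)!}{(m!)^n n!}\, .$$
Decomposing $\mathcal{L}(\mathcal{C}_{m,n})$ as a disjoint union of these orbits decomposes the permutation module $P_{m,n}$ as the corresponding direct sum of submodules, each of which, being a transitive permutation representation with trivial stabilizers, is isomorphic to the regular representation $\mathbb{Q}(S_m\wr S_n)$.

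Finally, for the ``in particular'' clause I would invoke the standard fact that for any finite group $G$ the regular representation over a field containing enough roots of unity (here $\mathbb{Q}$ suffices by \cite{JamesKerber} for wreath products of symmetric groups, since their characters are rational) decomposes as $\bigoplus_{\lambda}(S^{\lambda})^{\oplus \dim S^{\lambda}}$ over the irreducibles $S^{\boldsymbol\lambda}$ classified in Subsection~\ref{subsection:reps}. Stacking one such decomposition per orbit yields the displayed formula. The only genuine obstacle is the freeness argument; everything else is orbit-counting together with a standard representation-theoretic fact, and since this proposition is essentially already in \cite{BarceloEtAl}, I would simply record the freeness computation carefully and cite their statements for the rest.
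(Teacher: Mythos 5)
Your proof is correct, and it follows the same route as the source this proposition is quoted from: the paper itself does not reprove it but cites Theorems 3.1 and 3.7 of \cite{BarceloEtAl}, whose argument is exactly your chain ``the action on committees is faithful (for $m\geq 2$), hence the action on strict rankings is free, hence each orbit carries a copy of $\mathbb{Q}(S_m\wr S_n)$, and orbit-counting plus the standard decomposition of the regular representation gives the displayed formula.'' Your explicit verification of faithfulness from the action formula, and the remark that $\mathbb{Q}$ is a splitting field for $S_m\wr S_n$, are precisely the details one would want recorded.
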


\begin{example}\label{ex:2wr2allrankings}
It will shortly be crucial to see how this decomposition works, so let $m=n=2$, the simplest possible case.  Following \cite{LeeThesis}, let $W=(1_1,2_1)$, $X=(1_1,2_2)$, $Y=(1_2,2_1)$, and $Z=(1_2,2_2)$.  Then in Figure \ref{figure:2wr2pockets} we list all twenty-four rankings in a very specific order.  If we consider $W\succ Y\succ X\succ Z$ as a `reference' ranking since it has disjoint committees as its first and last options, then the second orbit comes via the cycle permuting the second to fourth to third item in each ranking, while the third orbit comes from swapping the second and last item in the rankings.

\begin{figure}[H]
$$\begin{array}{lcccccccc}
\parbox{1.5in}{Orbit 1: \\Disjoint committees\\ first/last or middle} & 
\rank{W}{Y}{X}{Z} & \rank{W}{X}{Y}{Z} & \rank{Z}{Y}{X}{W} & \rank{Z}{X}{Y}{W} & \rank{Y}{Z}{W}{X}  & \rank{Y}{W}{Z}{X}  & \rank{X}{Z}{W}{Y}  & \rank{X}{W}{Z}{Y} \\  
\\
\parbox{1.5in}{Orbit 2: \\Disjoint committees\\ first/third or second/last} & 
\rank{W}{X}{Z}{Y} & \rank{W}{Y}{Z}{X} & \rank{Z}{X}{W}{Y} & \rank{Z}{Y}{W}{X} & \rank{Y}{W}{X}{Z}  & \rank{Y}{Z}{X}{W}  & \rank{X}{W}{Y}{Z}  & \rank{X}{Z}{Y}{W} \\  
\\
\parbox{1.5in}{Orbits 3: \\Disjoint committees\\ first/second or third/last} & 
\rank{W}{Z}{X}{Y} & \rank{W}{Z}{Y}{X} & \rank{Z}{W}{X}{Y} & \rank{Z}{W}{Y}{X} & \rank{Y}{X}{W}{Z}  & \rank{Y}{X}{Z}{W}  & \rank{X}{Y}{W}{Z}  & \rank{X}{Y}{Z}{W} \\  
\end{array}$$
\caption{Ranking committees by orbits}
\label{figure:2wr2pockets}
\end{figure}

Considering each of these rankings as giving a basis element for $P_{2,2}$, then the decomposition $P\simeq (\mathbb{Q}S_2\wr S_2 )^{\oplus ^3}$ is given by these three orbits; each of the subspaces then decomposes as $$\mathbb{Q}S_2\wr S_2\simeq S^{((2),\emptyset)}\oplus S^{((1),(1))^{\oplus^2}}\oplus S^{(\emptyset,(2))}\oplus S^{((1,1),\emptyset)}\oplus S^{(\emptyset,(1,1))}$$  
We will get a finer-grained description in Section \ref{sec:systems}.
\end{example}

\begin{remark}\label{rmk:profiletoresults}
The paper \cite{BarceloEtAl} consider voting rules which extend Definitions \ref{defn:points-based-general} and \ref{defn:committeeneutral} to functions from $P_{m,n}$ to $R_{m,n}$ (or the underlying power set of possible committees).  The definition of neutrality is exactly the same -- invariance under $S_m\wr S_n$.  The only difference is that our ballot-scoring function now must involve the rankings, so that $s:\mathcal{L}\left(\mathcal{C}_{m,n}\right)\times \mathcal{C}_{m,n}\to \mathbb{Q}$.  Then a points-based voting rule on rankings is simply an $S_m\wr S_n$-module homomorphism $$\bigoplus_{\frac{(m^n)!}{(m!)^n n!}}\bigoplus_{\lambda\text{ irr.}}(S^{\lambda})^{\oplus \text{ dim }(S^{\lambda)}}\to \bigoplus_{k=0}^n S^{((n-k),(k),\emptyset,\ldots,\emptyset)}\; .$$
\end{remark}

It is immediate (from Schur's Lemma) that any such rule will have any irreducible not of the form $S^{((n-k),(k),\emptyset,\ldots,\emptyset)}$ in its kernel, \emph{and} even in the $m=n=2$ case with only $\frac{2^2!}{(2!)^2 2!}=3$ orbits, that the kernel will include much isomorphic to those irreducibles as well.

In \cite{BarceloEtAl}, the case considered is where one set of $m^n$ weights is used for all rankings, regardless of orbit\footnote{In \cite{DixonMortimer}, Sections 2.6 and 2.7, the action of $S_m\wr S_n$ on our set of committees is considered as an action on the set of functions $f:[n]\to [m]$, sometimes written $[m]^{[n]}$.  This action is called the \emph{product action}, and since it faithfully permutes the committees, it gives a canonical way to embed $S_m\wr S_n$ as a subgroup of $S_{m^n}$.}.  In this case, we may consider a rule $P_{m,n}\to R_{m,n}$ as a $\mathbb{Q}S_{m^n}$-module homomorphism as well, which is the situation in \cite{OrrisonSymmetry} and `ordinary' voting theory.

A major result of regular voting theory is that, given a large number of weighting vectors and possible outcomes, there exists a \emph{single} profile which yields each of those outcomes for one of the weights.  The nominal content of Theorem 4.2 in \cite{BarceloEtAl} is that this specializes to the structured committee case.  The following is a more precise and powerful statement of that result (proof in \ref{subsec:proofsprofile}).

\begin{theorem}[Expansion of Theorem 4.2 of \cite{BarceloEtAl}]\label{thm:expandedBarceloEtAl4.2}
Let $n\geq 2$.  Suppose that ${\bf w}_1,\ldots ,{\bf w}_j$ form a set of sum-zero weighting vectors such that for each $k\in [n]$ their projections $\text{Proj}_k({\bf w}_1,\ldots ,{\bf w}_j)$ onto $S^{((n-k),(k),\emptyset,\ldots ,\emptyset)}$ are linearly independent.  Let ${\bf r}_1, \ldots ,{\bf r}_j$ be any results vectors whose entries sum to zero.  Then, for a decomposition of $P_{m,n}$ into orbits, then there exist infinitely many profiles ${\bf p}$ {\bf in each orbit} such that $T_{{\bf w}_i}({\bf p})={\bf r}_i$ for all $1\leq i\leq j$.  So the set of such profiles has $\frac{(m^n)!}{(m!)^n n!}$ parameters.
\end{theorem}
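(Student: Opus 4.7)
The plan is to prove the theorem orbit-by-orbit. By Proposition~\ref{prop:barceloprofiledecomp}, the profile space decomposes as
$$P_{m,n} \;\simeq\; \bigoplus_{O} P_O,$$
where $O$ ranges over the $\frac{(m^n)!}{(m!)^n n!}$ orbits of $S_m\wr S_n$ on strict rankings, and each $P_O$ carries one copy of the regular representation of $S_m\wr S_n$. Each map $T_{\mathbf{w}_i}\colon P\to R$ is $S_m\wr S_n$-equivariant, so it respects this direct sum and can be analyzed on each $P_O$ in isolation. It therefore suffices to show that within each fixed orbit $O$ there are infinitely many profiles $\mathbf{p}_O \in P_O$ simultaneously satisfying $T_{\mathbf{w}_i}(\mathbf{p}_O) = \mathbf{r}_i$ for every $i$; taking $\mathbf{p} = \mathbf{p}_O$ then lives in that single orbit, and the orbit count delivers the advertised $\frac{(m^n)!}{(m!)^n n!}$-parameter family.

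Fix an orbit $O$. Because $P_O$ is a regular representation, every irreducible $S^{\boldsymbol{\lambda}}$ of $S_m\wr S_n$ appears in $P_O$ with multiplicity $\dim S^{\boldsymbol{\lambda}}$. By Theorem~\ref{Correct3.8} the only irreducibles occurring in $R_{m,n}$ are $V_k := S^{((n-k),(k),\emptyset,\ldots,\emptyset)}$ for $k = 0,1,\ldots,n$, so Schur's Lemma forces every $T_{\mathbf{w}_i}$ to vanish on each isotypic of $P_O$ whose label is not of this form. The sum-zero hypotheses on the $\mathbf{w}_i$ and $\mathbf{r}_i$ further let us discard $k=0$ (the trivial irreducible). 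For each $k \in \{1,\ldots,n\}$, write $d_k = \dim V_k = \binom{n}{k}(m-1)^k$; then the $V_k$-isotypic of $P_O$ is $V_k^{\oplus d_k}$, while the $V_k$-isotypic of $R_{m,n}$ is a single $V_k$.

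The central algebraic step is to show that the assignment $\mathbf{w} \mapsto T_{\mathbf{w}}\bigl|_{V_k\text{-isotypic of }P_O}$ factors through $\mathrm{Proj}_k\colon R_{m,n}\to V_k$ and induces a $\mathbb{Q}$-linear isomorphism
$$V_k \;\xrightarrow{\;\sim\;}\; \mathrm{Hom}_{S_m\wr S_n}\!\bigl(V_k^{\oplus d_k},\, V_k\bigr).$$
Both sides have dimension $d_k$ (the left by definition; the right by Schur's Lemma, since $\mathrm{Hom}_G(V_k,V_k)\simeq\mathbb{Q}$ for $G=S_m\wr S_n$), and injectivity is checked by exhibiting a profile in $P_O$ whose image under $T_{\mathbf{w}}$ recovers $\mathrm{Proj}_k(\mathbf{w})$ coordinate-by-coordinate in the explicit basis from Proposition~\ref{prop:committeek}. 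Granted this isomorphism, the hypothesis that $\mathrm{Proj}_k(\mathbf{w}_1),\ldots,\mathrm{Proj}_k(\mathbf{w}_j)$ are linearly independent in $V_k$ is equivalent to linear independence of the induced homomorphisms $T_{\mathbf{w}_i}|_{V_k\text{-isotypic}}$, and therefore the stacking
$$\bigl(T_{\mathbf{w}_1},\ldots,T_{\mathbf{w}_j}\bigr)\colon V_k^{\oplus d_k} \longrightarrow V_k^{\oplus j}$$
acts as a $j\times d_k$ scalar matrix of full row rank $j$ blockwise on $V_k$, hence is surjective.

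For each $k\in\{1,\ldots,n\}$ pick a preimage of $\bigl(\mathrm{Proj}_k(\mathbf{r}_1),\ldots,\mathrm{Proj}_k(\mathbf{r}_j)\bigr)$ inside the $V_k$-isotypic of $P_O$ and sum over $k$ to obtain $\mathbf{p}_O \in P_O$ with $T_{\mathbf{w}_i}(\mathbf{p}_O)=\mathbf{r}_i$ for all $i$. Infinitely many such $\mathbf{p}_O$ exist because $\dim P_O=(m!)^n n!$ strictly exceeds $\sum_{k=0}^n d_k^2$, so the joint kernel of the $T_{\mathbf{w}_i}$ inside $P_O$ already has positive dimension coming just from the isotypics whose labels lie outside $\{((n-k),(k),\emptyset,\ldots,\emptyset)\}_{k=0}^n$; any translate of such a kernel vector by a fixed solution remains a solution. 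I expect the main obstacle to be the precise verification of the Schur isomorphism $V_k \simeq \mathrm{Hom}_{S_m\wr S_n}(V_k^{\oplus d_k}, V_k)$: this amounts to identifying the regular-representation isotypic $V_k^{\oplus d_k}$ with $V_k\otimes V_k^\ast$ as a $G$-module and tracing exactly how the scalar on each copy is determined by $\mathrm{Proj}_k(\mathbf{w})$; once that identification is in hand the remainder of the argument is routine linear algebra on multiplicities.
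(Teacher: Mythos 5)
Your overall architecture---decompose $P_{m,n}$ into orbits, treat each orbit as one copy of the regular representation, and run a Schur's Lemma argument isotypic component by isotypic component---is a genuinely different route from the paper's, which instead patches the Jacobson-density argument of \cite{BarceloEtAl} to produce one solution inside $\mathbb{Q}S_m\wr S_n\subset\mathbb{Q}S_{m^n}$ and then moves it to the remaining orbits by the ambient $S_{m^n}$-action. However, there is a genuine gap at precisely the step you flag as ``the main obstacle,'' and the problem is not merely that the verification is omitted: the claim as you have set it up cannot be true. You assert that $T_{\mathbf{w}}$ restricted to the $V_k$-isotypic of a fixed orbit $P_O$ factors through a single, orbit-independent $\mathrm{Proj}_k(\mathbf{w})$. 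But $\mathbf{w}$ is indexed by \emph{ranks}, not committees; to regard it as an element of $R_{m,n}$ and project onto $S^{((n-k),(k),\emptyset,\ldots,\emptyset)}$ one must reindex by a reference ranking, and the reference rankings of distinct orbits are related only by elements of $S_{m^n}$ lying \emph{outside} $S_m\wr S_n$ (that is exactly what makes the orbits distinct). Concretely, $T_{\mathbf{w}}|_{P_O}$ is the equivariant map sending the basis vector of the reference ranking $\rho_O$ to the score vector $c\mapsto w_{\mathrm{pos}_{\rho_O}(c)}$, so the Schur scalars it induces on the $V_k$-isotypic are governed by $\mathrm{Proj}_k$ of that \emph{reindexed} vector, which changes from orbit to orbit in a non-equivariant way. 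The paper's Example~\ref{ex:oneweight} already witnesses this: the single weighting vector $[1/2,-1/2,-1/2,1/2]^T$ has effective space $S^{(\emptyset,(2))}$ on the first orbit but $S^{((1),(1))}$ on the other two.

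The consequence is that your surjectivity argument only goes through for the orbit whose reference ranking realizes the identification under which the hypothesis ``$\mathrm{Proj}_k(\mathbf{w}_1),\ldots,\mathrm{Proj}_k(\mathbf{w}_j)$ linearly independent'' is read; for every other orbit you must show that this independence survives the reindexing, which is the delicate point the paper's proof is gesturing at with its replacement of $\mathbf{w}$ by $\tilde{\mathbf{w}}$ and its closing remark about acting by elements of $S_{m^n}$. For instance, with $m=n=2$ and the sum-zero vector $\mathbf{w}=[1,-1,2,-2]^T$, both nontrivial projections are nonzero under the first orbit's identification, yet the score vector this $\mathbf{w}$ assigns to the ranking $W\succ Z\succ X\succ Y$ is $[1,2,-2,-1]^T$ in the order $W,X,Y,Z$, which is orthogonal to $[1,-1,-1,1]^T$; hence $T_{\mathbf{w}}$ annihilates the entire $S^{(\emptyset,(2))}$-isotypic of the third orbit, and no profile supported there can reach a result with nonzero $S^{(\emptyset,(2))}$-component. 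Since the ``in each orbit'' clause is the whole content of the expansion over \cite{BarceloEtAl}, this is the part your proof does not yet establish. The remainder of your argument---the blockwise scalar matrix of rank $j$, the dimension count giving infinitely many solutions from the kernel, and the parameter count over orbits---is routine and correct once the per-orbit isomorphism is actually in hand for that orbit.
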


For example, in Example 4.3 of \cite{BarceloEtAl} the authors give an element of $\mathbb{Q}S_2\wr S_2$ corresponding to the first orbit in Figure \ref{figure:2wr2pockets}.  The theorem says that we should expect one of these for \emph{each} of the three orbits.  For larger $m,n$ these come in quite large numbers.

\begin{example}\label{ex:3wr3dimensions}
In the case of $S_3\wr S_3$, we can take any \emph{six} weighting vectors that stay linearly independent in the three nontrivial subspaces.  Then pick \emph{any} six results, and we get $\frac{3^3!}{(3!)^3 3!}=8401905440137617408000000$ `dimensions' of profiles that yield these specific results vectors from these specific weights!
\end{example}

The tradeoff, if we wish to stick with rules that observe the more restricted symmetry of the wreath product, is that we do not have as many possible different outcomes from a single profile as in regular voting theory, but a \emph{much} larger dimension space of such profiles to choose from.  As for the number of weights/outcomes, other than for $m=2$ a short combinatorial analysis shows that the Borda component is the smallest nontrivial component, so that in Theorem \ref{thm:expandedBarceloEtAl4.2} one may take $j=n(m-1)$.

\section{Systems upon Systems}\label{sec:systems} 

\subsection{Taking orbits into account}

We left off in Proposition \ref{prop:barceloprofiledecomp} and Remark \ref{rmk:profiletoresults} with the observation that a (neutral) points-based voting rule on rankings is simply an $S_m\wr S_n$-module homomorphism $$\bigoplus_{\frac{(m^n)!}{(m!)^n n!}}\bigoplus_{\lambda\text{ irr.}}(S^{\lambda})^{\oplus \text{ dim }(S^{\lambda)}}\to \bigoplus_{k=0}^n S^{((n-k),(k),\emptyset,\ldots,\emptyset)}\; .$$  In the remainder of Section \ref{sec:profile} we only considered the specific case of weighting vectors on voters' rankings that are always the same.

However, there is a \emph{huge}, and very interesting, variety of voting rules available to us if we keep the decomposition of $P_{m,n}$ into orbits in mind.  Section 4.2 of \cite{LeeThesis} foresaw just how complex it could get, though without considering different weights in different orbits. 

\begin{prop}\label{prop:numberconstants}
There are $\frac{m^n!m^n}{(m!)^n n!}$ parameters involved in a \emph{general} points-based voting rule on rankings of committees.
\end{prop}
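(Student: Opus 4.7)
The plan is to reduce the statement to a dimension count of a Hom-space of $\mathbb{Q}S_m \wr S_n$-modules. As spelled out in Remark \ref{rmk:profiletoresults}, a neutral points-based rule on rankings is precisely an $S_m\wr S_n$-module homomorphism $P_{m,n}\to R_{m,n}$; the adjective \emph{general} in the statement means we now permit a different weighting scheme on each orbit, so the space of such rules really is the full $\mathrm{Hom}_{S_m\wr S_n}(P_{m,n},R_{m,n})$, and the number of parameters is its $\mathbb{Q}$-dimension.

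First I would decompose both sides using machinery already in the paper. Proposition \ref{prop:barceloprofiledecomp} realizes $P_{m,n}$ as $N := \frac{(m^n)!}{(m!)^n n!}$ copies of the regular representation of $S_m\wr S_n$, while Theorem \ref{Correct3.8} gives $R_{m,n} \simeq \bigoplus_{k=0}^n S^{((n-k),(k),\emptyset,\ldots,\emptyset)}$. Since $\mathrm{Hom}$ is additive in its first argument, this instantly reduces the dimension computation to
$$\dim_\mathbb{Q}\mathrm{Hom}_{S_m\wr S_n}(P_{m,n},R_{m,n}) \;=\; N \cdot \dim_\mathbb{Q}\mathrm{Hom}_{S_m\wr S_n}\bigl(\mathbb{Q}[S_m\wr S_n],\,R_{m,n}\bigr).$$

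Next I would invoke the standard identification $\mathrm{Hom}_G(\mathbb{Q}[G],V)\cong V$ via $\varphi\mapsto \varphi(1)$, whose right-hand side has $\mathbb{Q}$-dimension $m^n$ since $R_{m,n}$ is spanned by the $m^n$ committees. Multiplying, the parameter count becomes $N\cdot m^n = \frac{(m^n)!\,m^n}{(m!)^n n!}$, as claimed. To make the content more transparent in terms of the explicit irreducible components, one may alternatively apply Schur's Lemma summand-by-summand: the regular representation of $G$ contains $S^\lambda$ exactly $\dim S^\lambda$ times, so the per-orbit contribution is $\sum_{k=0}^n \dim S^{((n-k),(k),\emptyset,\ldots,\emptyset)} = \sum_{k=0}^n \binom{n}{k}(m-1)^k = m^n$ by the binomial theorem, recovering the same count.

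I do not expect a serious obstacle: the proof is essentially a bookkeeping exercise riding on Theorem \ref{Correct3.8} and Proposition \ref{prop:barceloprofiledecomp}. The one genuinely substantive point is conceptual rather than technical -- it will be crucial to explicitly note that the factor $N$ \emph{only} arises once we allow independent weights on each orbit, the setting opened up at the start of Section \ref{sec:systems}; restricting instead to a single fixed weighting vector across all orbits would collapse this to one copy of $\mathrm{Hom}(\mathbb{Q}[S_m\wr S_n],R_{m,n})$ and recover only the $m^n$ parameters (effectively $n+1$ after Proposition \ref{prop:neutralpointsR}) seen earlier.
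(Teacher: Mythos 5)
Your proof is correct and follows essentially the same route as the paper: decompose $P_{m,n}$ into $\frac{(m^n)!}{(m!)^n n!}$ copies of the regular representation, count $m^n$ parameters per copy via Schur's Lemma (or, equivalently, via $\mathrm{Hom}_G(\mathbb{Q}[G],R_{m,n})\cong R_{m,n}$, a slightly slicker packaging of the same multiplicity count $\sum_k\binom{n}{k}(m-1)^k=m^n$), and multiply. Your closing remark about the factor $N$ arising only when weights may vary by orbit is exactly the intended reading of ``general'' here.
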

\begin{proof}
For each $k$, each results subspace $S^{((n-k),(k),\emptyset,\ldots ,\emptyset)}$ has (by Theorem \ref{Correct3.8}) dimension $\binom{n}{k}(m-1)^k$.  That means in the regular representation of $S_m\wr S_n$ there are $\binom{n}{k}(m-1)^k$ of them, so by Schur's Lemma for \emph{each} orbit there are that many parameters (for each $k$).  Apply the identity $\sum_{k=0}^n \binom{n}{k}(m-1)^k=m^n$ and multiply.
\end{proof}

\begin{example}[Example \ref{ex:3wr3dimensions}, cont.]
Even in the case of $m=3$ candidates for each of $n=3$ departments, we would have an astonishing $226851446883715670016000000$ parameters to divide up among the over $8$ septillion orbits.
\end{example}

The interplay between these analyses can yield much fruit.  By regular voting theory (e.g.~\cite{OrrisonSymmetry}) we then are analyzing functions $P\simeq \mathbb{Q}S_{m^n}\to \mathbb{Q}^{m^n}$, and for any set of nontrivial weights there is a profile that can yield any particular outcome vector (by Schur's Lemma and the decomposition $\mathbb{Q}^{m^n}\simeq S^{(m^n)}\oplus S^{(m^n-1,1)}$).  This is also true of the Borda count, but in the committee setting there is a useful contrast!

\begin{prop}\label{prop:bordaweightsborda}
The Borda weights $[m^n-1,m^n-2,\ldots 3,2,1,0]^T$, considered as an element of $R$, are in the trivial and Borda subspaces.
\end{prop}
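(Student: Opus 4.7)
The plan is to exploit the tensor-product identification $R_{m,n} \cong \mathcal{N}^{\otimes n}$, where $\mathcal{N} \cong S^{(m)} \oplus S^{(m-1,1)}$, under which the basis vector for committee $(1_{j_1},\ldots,n_{j_n})$ corresponds to $e_{j_1} \otimes \cdots \otimes e_{j_n}$. By Theorem~\ref{Correct3.8} and the Kronecker construction in Fact~\ref{fact:kronecker}, the subspace $S^{((n),\emptyset,\ldots,\emptyset)} \oplus S^{((n-1),(1),\emptyset,\ldots,\emptyset)}$ is precisely the span of $\mathbf{1}^{\otimes n}$ together with all pure tensors of the form $\mathbf{1}^{\otimes(k-1)} \otimes v \otimes \mathbf{1}^{\otimes(n-k)}$ for $k=1,\ldots,n$ and $v \in S^{(m-1,1)} \subset \mathcal{N}$. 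Hence it suffices to exhibit the Borda weight vector as such a sum.

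First I would read off the Borda weight entry for committee $(1_{j_1},\ldots,n_{j_n})$ from the lex ordering: its position (0-indexed) is the base-$m$ expansion $\sum_{k=1}^n (j_k-1)m^{n-k}$, so the Borda weight is
$$m^n-1 \;-\; \sum_{k=1}^n (j_k-1)m^{n-k}.$$
The crucial observation is that this is a \emph{separable} function of $(j_1,\ldots,j_n)$: a constant plus a sum of terms each depending on a single coordinate. Such a separable function corresponds tensorially to
$$(m^n-1)\,\mathbf{1}^{\otimes n} \;+\; \sum_{k=1}^n \mathbf{1}^{\otimes(k-1)} \otimes v_k \otimes \mathbf{1}^{\otimes(n-k)},$$
where $v_k \in \mathcal{N}$ has $i$-th coordinate $-(i-1)m^{n-k}$.

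Finally, I would decompose each $v_k = \alpha_k \mathbf{1} + v'_k$ with $\alpha_k$ its mean and $v'_k \in S^{(m-1,1)}$ its sum-zero part. Each summand $\mathbf{1}^{\otimes(k-1)} \otimes v_k \otimes \mathbf{1}^{\otimes(n-k)}$ then splits as $\alpha_k\,\mathbf{1}^{\otimes n}$ plus a vector in the $k$-th copy of $S^{(m)^{\otimes(n-1)}} \otimes S^{(m-1,1)}$; by Theorem~\ref{Correct3.8} these $n$ copies together constitute $S^{((n-1),(1),\emptyset,\ldots,\emptyset)}$. Collecting all the constant pieces with the initial $(m^n-1)\,\mathbf{1}^{\otimes n}$, the Borda weight vector thus lies in the trivial plus Borda subspace, as claimed. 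The main place requiring care is matching the lex order on committees with the coordinate indexing of $e_{j_1}\otimes\cdots\otimes e_{j_n}$ so the base-$m$ expansion yields exactly the separable form above; beyond this essentially bookkeeping step, I do not expect a substantive obstacle.
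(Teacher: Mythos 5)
Your proof is correct and rests on the same key observation as the paper's: the lexicographic position of a committee is the base-$m$ expansion $\sum_k (j_k-1)m^{n-k}$, so the Borda weight is a separable (per-department) function whose department-wise sum-zero parts land in the $n$ copies of $S^{(m)^{\otimes(n-1)}}\otimes S^{(m-1,1)}$ constituting the Borda subspace. The paper merely runs the same computation in the synthetic direction, summing vectors of the form $\frac{m-2i+1}{2}m^{n-d}$ from the Borda subspace and identifying the result with the sum-zero Borda weights via a ``balanced $n$-ary expansion,'' whereas you decompose the raw weights and subtract means at the end.
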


(Proof in \ref{subsec:proofssystems}.)

\begin{prop}\label{prop:bordaweightsaction}
Ordinarily, the Borda weight used uniformly would have an image of the entirety of $R$.  However, the previous result shows that if we use these weights \emph{under the action of $S_{m^n}$ in each orbit}, the image will lie only in $S^{((n),\emptyset,\ldots ,\emptyset)}\oplus S^{((n-1),(1),\emptyset,\ldots ,\emptyset)}$.
\end{prop}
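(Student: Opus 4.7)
The plan is a direct appeal to Proposition \ref{prop:bordaweightsborda} combined with the $S_m\wr S_n$-invariance of the target subspace. Let $W:=S^{((n),\emptyset,\ldots,\emptyset)}\oplus S^{((n-1),(1),\emptyset,\ldots,\emptyset)}$. By Proposition \ref{prop:bordaweightsborda} the Borda vector $\mathbf{w}=[m^n-1,m^n-2,\ldots,0]^T$ lies in $W$, and since $W$ is by construction a direct sum of $S_m\wr S_n$-irreducibles of $R$, it is closed under the $S_m\wr S_n$-action. Hence every translate $g\cdot\mathbf{w}$ with $g\in S_m\wr S_n$ remains in $W$.

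Next I would carefully unpack what "the action of $S_{m^n}$ in each orbit" really does. Within a single $S_m\wr S_n$-orbit $O\subset\mathcal{L}(\mathcal{C}_{m,n})$ of rankings, neutrality forces the weight vectors attached to distinct rankings to be $S_m\wr S_n$-translates of a single seed vector chosen on a representative $\pi_0\in O$ (this is the content of the orbit-decomposition discussion of Section \ref{sec:profile} and Proposition \ref{prop:barceloprofiledecomp}). Taking $\mathbf{w}$ as the seed for each orbit, the image of the rule restricted to $\mathbb{Q}O\subseteq P$ is precisely the $S_m\wr S_n$-submodule of $R$ generated by $\mathbf{w}$, which lies in $W$ by the previous paragraph. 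Summing across the $\frac{(m^n)!}{(m!)^n n!}$ orbits then yields the claim for the full rule.

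The contrast with the ordinary uniform Borda rule that is promised in the proposition statement also falls out immediately: that rule instead extends $\mathbf{w}$ by the full ambient $S_{m^n}$-action on rankings, and the larger group $S_{m^n}\supsetneq S_m\wr S_n$ does not preserve $W$ (it only respects the coarser decomposition $R\simeq S^{(m^n)}\oplus S^{(m^n-1,1)}$). Since $\mathbf{w}$ has nonzero components in both $S_{m^n}$-isotypic pieces, its $S_{m^n}$-orbit spans all of $R$, which is why "ordinarily" the image is everything.

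The only piece demanding care is recognising that "in each orbit" selects the restricted $S_m\wr S_n$-subaction as the one acting on the seed, rather than the full $S_{m^n}$-action that would move $\mathbf{w}$ out of $W$ into, say, the sign component. Once that is pinned down, the result is essentially a one-line corollary of Proposition \ref{prop:bordaweightsborda} plus the invariance of $W$; there is no computational obstacle of any note.
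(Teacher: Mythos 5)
Your proposal is correct and matches the paper's (essentially unstated) argument: Proposition \ref{prop:bordaweightsborda} places the Borda vector in the trivial-plus-Borda submodule, that submodule is $S_m\wr S_n$-invariant, and permuting the weights within each orbit makes every column of the scoring matrix a wreath-product translate of the seed, so the image stays inside it. Your added observations — that the orbit action on strict rankings makes each ballot's result vector a translate of a single seed, and that the full $S_{m^n}$-orbit of the Borda vector would instead span all of $R$ because it has nonzero components in both $S^{(m^n)}$ and the irreducible $S^{(m^n-1,1)}$ — are exactly the details the paper defers to Remark \ref{rmk:2wr2identdecomp} and Example \ref{ex:2wr2allborda}.
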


We will explain `under the action' by example in Remark \ref{rmk:2wr2identdecomp}.

\subsection{Weights and spaces for the case $S_2\wr S_2$}\label{subsec:systems2wr2same}

The number of possible rules is too big to visualize in any case beyond $m=n=2$, but even here the diversity of methods is quickly apparent, and worth exploring.  

Let $m=n=2$.  Recall the three orbits involving $W=(1_1,2_1)$, $X=(1_1,2_2)$, $Y=(1_2,2_1)$, and $Z=(1_2,2_2)$ in Figure \ref{figure:2wr2pockets}.  Recall also that we can think of $P_{2,2}$ as being a vector space generated by a basis with elements indexed by the twenty-four rankings, and with $S_2\wr S_2$-action induced by the action on the actual committees.  By Proposition \ref{prop:numberconstants} there should be twelve parameters for a given voting rule.

These parameters are grouped by orbit type. 
\begin{itemize}
\item For rankings where the first and last place committees are disjoint (such as $W\succ X\succ Y\succ Z$), we can give $a_1$ points to the first-place committee, $b_1$ points to second-place, $c_1$ and $d_1$ for the last two spots.
\item If a ranking is in the second orbit where the first and third place committees are disjoint (such as $W\succ X\succ Z\succ Y$), call the weights $a_2$, $b_2$, etc.
\item Finally, if for some reason a voter put disjoint committees in the first two spots (like $W\succ Z\succ X\succ Y$), call them $a_3$, $b_3$, etc.
\end{itemize}

Hence we have three sets of four weights, for twelve total.  It is extremely important to emphasize that \emph{there is absolutely no requirement that any of these weights be connected with each other.}  In principle, we can have completely different weights for rankings with different structures -- in the $m=n=2$ case, where they place disjoint committees in different relationships to each other.  That said, to obtain \emph{one} set of weights ${\bf w}$ as in \cite{BarceloEtAl} and \cite{LeeThesis}, we simply need to let all $a_i=a$ and so forth.

Since a voting rule can be thought of as a linear transformation, there should be a matrix for each rule, corresponding to the scoring function $s$.  For most of the maps in this paper we omit the matrices for length and size consideration, but it is worth including Figure \ref{figure:2wr2weights} where we give the complete matrix of weights for a voting rule, \emph{organized by orbits}.   For example, if a voter chose a ranking where the first and last choice were \emph{not} disjoint but of the form $W\succ X\succ Z\succ Y$, $W$ would receive $a_2$ points, $X$ would receive $b_2$ points, and so forth.

\begin{figure}[H] 
\begin{adjustbox}{center}$\left(\begin{array}{rrrrrrrr|rrrrrrrr|rrrrrrrr}
a_{1} & a_{1} & d_{1} & d_{1} & c_{1} & b_{1} & c_{1} & b_{1} & a_{2} & a_{2} & c_{2} & c_{2} & b_{2} & d_{2} & b_{2} & d_{2} & a_{3} & a_{3} & b_{3} & b_{3} & c_{3} & d_{3} & c_{3} & d_{3} \\
c_{1} & b_{1} & c_{1} & b_{1} & d_{1} & d_{1} & a_{1} & a_{1} & b_{2} & d_{2} & b_{2} & d_{2} & c_{2} & c_{2} & a_{2} & a_{2} & c_{3} & d_{3} & c_{3} & d_{3} & b_{3} & b_{3} & a_{3} & a_{3} \\
b_{1} & c_{1} & b_{1} & c_{1} & a_{1} & a_{1} & d_{1} & d_{1} & d_{2} & b_{2} & d_{2} & b_{2} & a_{2} & a_{2} & c_{2} & c_{2} & d_{3} & c_{3} & d_{3} & c_{3} & a_{3} & a_{3} & b_{3} & b_{3} \\
d_{1} & d_{1} & a_{1} & a_{1} & b_{1} & c_{1} & b_{1} & c_{1} & c_{2} & c_{2} & a_{2} & a_{2} & d_{2} & b_{2} & d_{2} & b_{2} & b_{3} & b_{3} & a_{3} & a_{3} & d_{3} & c_{3} & d_{3} & c_{3}
\end{array}\right)$\end{adjustbox}
\caption{Matrix for the three orbits}
\label{figure:2wr2weights}
\end{figure}

It is now much easier to see where everything goes.  We have three copies of $\mathbb{Q}S_2\wr S_2$, each of which decomposes as $$\mathbb{Q}S_2\wr S_2\simeq S^{((2),\emptyset)}\oplus S^{((1),(1))^{\oplus^2}}\oplus S^{(\emptyset,(2))}\oplus S^{((1,1),\emptyset)}\oplus S^{(\emptyset,(1,1))}$$  
We know the latter two (one-dimensional) subspaces will go to zero since they do not show up in the decomposition of $R$, but there are four other subspaces and hence four other weights in each orbit.  For each of the three orbits, the following decomposition of the weights is most useful.

$$\begin{bmatrix}a\\b\\c\\d\end{bmatrix}=x_1\begin{bmatrix}1\\1\\1\\1\end{bmatrix}+x_2\begin{bmatrix}1\\-1\\-1\\1\end{bmatrix}+x_3\begin{bmatrix}1\\0\\0\\-1\end{bmatrix}+x_4\begin{bmatrix}0\\1\\-1\\0\end{bmatrix}$$

The following proposition clarifies exactly which subspaces $S^{((2),\emptyset)}\oplus S^{((1),(1))^{\oplus^2}}\oplus S^{(\emptyset,(2))}$ are in the kernel of our voting rule in each orbit (considering the three orbits as isomorphic), as well as the exact subspace which is the orthogonal complement to the kernel, the \emph{effective space} of \cite{OrrisonSymmetry}, which by Schur's Lemma must have the same algebraic structure.  We say \emph{governs} if a parameter (including one from Schur's Lemma) is directly proportional to another parameter.

\begin{prop}\label{prop:2wr2forpockets}
The $S^{((2),\emptyset)}$ subspace in all three orbits is governed by $x_1$, which is governed by $a+b+c+d$.  If this is zero, then it is killed, otherwise not. The behavior of the other spaces is more complex.  

In the first orbit:
\begin{itemize}
\item The parameter $x_2$ is governed by whether or not $a+d=b+c$ (if so, $x_2$ is zero).  It governs $S^{(\emptyset,(2))}$. 
\item The parameters $x_3$, $x_4$ in both cases work with $S^{((1),(1))}$ subspaces, and their precise combination determines which precise two-dimensional subspace is in the effective space (and its complement in $S^{((1),(1))^{\oplus^2}}$ is in the kernel).
\end{itemize}
In the second orbit:
\begin{itemize}
\item The parameter $x_2$ in the other two orbits will go with the weight $[1,1,-1,-1]^T$, so if it is nonzero there will be a $S^{((1),(1))}$ subspace of the effective space (and its complement in the kernel).
\item In the second orbit the weights for $x_3$ and $x_4$ become $[1,-1,0,0]^T$ and $[0,0,1,-1]^T$, so if $x_3=x_4$ it kills the respective $S^{(\emptyset,(2))}$ and if $x_3=-x_4$ the same for one of the $S^{((1),(1))}$, otherwise not.
\end{itemize}
In the third orbit this situation is the same as the second orbit, except the roles of $x_3=\pm x_4$ is reversed.
\end{prop}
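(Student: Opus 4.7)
The plan is to verify each claim by directly computing the image of the four basis weightings $x_1,x_2,x_3,x_4$ on each of the three $8$-column blocks of Figure~\ref{figure:2wr2weights}, then reading off which irreducible subspace of $R_{2,2}$ those columns lie in. Schur's Lemma will then force the stated proportionalities.

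First I would recall the explicit bases of the irreducibles inside $R_{2,2}$ from Section~\ref{sec:results}: $S^{((2),\emptyset)} = \langle [1,1,1,1]^T\rangle$, $S^{(\emptyset,(2))} = \langle [1,-1,-1,1]^T\rangle$, and the isotypic component $S^{((1),(1))^{\oplus 2}}$ spanned by $[1,1,-1,-1]^T$ and $[1,-1,1,-1]^T$ (one from each Kronecker ordering in Proposition~\ref{prop:kronecker}). Note in particular that $[1,0,0,-1]^T$ and $[0,1,-1,0]^T$ both lie inside this two-dimensional isotypic subspace, as linear combinations of the two generators. The $x_1$ claim is then immediate: substituting $a=b=c=d=1$ anywhere in Figure~\ref{figure:2wr2weights} turns every column into $[1,1,1,1]^T$, so the image always lies in $S^{((2),\emptyset)}$ and is nontrivial exactly when $x_1 = (a+b+c+d)/4 \neq 0$.

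Next I would work through orbit~1 (the first eight columns of Figure~\ref{figure:2wr2weights}). Substituting $(a,b,c,d) = (1,-1,-1,1)$, a one-by-one inspection shows every column becomes $\pm[1,-1,-1,1]^T$, landing in $S^{(\emptyset,(2))}$; this confirms that $x_2$ alone governs the sign subspace, and that it vanishes exactly when $a+d = b+c$. Substituting $(1,0,0,-1)$ gives columns that are signed copies of $[1,0,0,-1]^T$; substituting $(0,1,-1,0)$ gives signed copies of $[0,1,-1,0]^T$. Since these two vectors are linearly independent in the four-dimensional $S^{((1),(1))^{\oplus 2}}$, the ratio $x_3 : x_4$ selects which two-dimensional $S^{((1),(1))}$ inside the isotypic component is hit, with the orthogonal copy falling into the kernel, as claimed.

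For orbits~2 and~3 I would repeat the same calculation on the remaining eight-column blocks. The difference is that although the reference committee $W$ is still in first place in every ranking of the orbit, the committee $Z$ disjoint from $W$ now sits in the third or the second slot rather than the fourth, which shuffles how the weight positions $(a,b,c,d)$ attach to the committees $W,X,Y,Z$ in each column. The explicit result is that the roles of the weight vectors $[1,-1,-1,1]^T$, $[1,0,0,-1]^T$, $[0,1,-1,0]^T$ are permuted: in orbit~2, $x_2=1$ produces columns lying in $S^{((1),(1))^{\oplus 2}}$ (spanned by signed copies of $[1,1,-1,-1]^T$); $x_3 \pm x_4$ combinations then produce $[1,-1,-1,1]^T$ or a complementary $S^{((1),(1))}$, yielding the $x_3 = x_4$ and $x_3 = -x_4$ kill conditions; orbit~3 is identical with the sign swapped. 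The main obstacle is purely bookkeeping: one must carefully match each of the 24 rankings to the four committee slots in the output so that the sign pattern in each column comes out correctly. Once the columns are tabulated, every claim reduces to reading off which basis vector of $R_{2,2}$ each column is a scalar multiple of, at which point Schur's Lemma delivers the stated proportionality between each $x_i$ and the image in the corresponding irreducible of $R_{2,2}$.
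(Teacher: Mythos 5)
Your overall strategy --- direct computation on the explicit $4\times 8$ blocks of Figure~\ref{figure:2wr2weights}, then identifying irreducibles and invoking Schur's Lemma --- is the same as the paper's, whose proof is literally a direct computation of the rank (and row space) of each $4\times 8$ matrix. However, several of your intermediate assertions are false and would derail the computation if carried out as written. The columns you obtain from the substitutions are generally \emph{not} signed copies of a single vector: in orbit 1 with $(a,b,c,d)=(1,0,0,-1)$, the column for the ranking $Y\succ Z\succ W\succ X$ is $[0,-1,1,0]^T$ (in the order $W,X,Y,Z$), not $\pm[1,0,0,-1]^T$. This is forced, not incidental: $S^{((1),(1))}$ is an irreducible \emph{two}-dimensional summand of $R_{2,2}$, so by Schur's Lemma any nonzero equivariant map into it is onto and its columns must span a two-dimensional space. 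The same objection applies to your orbit-2 claim that the $x_2$ columns are ``signed copies of $[1,1,-1,-1]^T$.'' Worse, in orbits 2 and 3 the columns arising from the $x_3,x_4$ components are not even contained in a single irreducible: in orbit 2 with $(a,b,c,d)=(1,0,0,-1)$ the first column is $[1,0,-1,0]^T=\frac{1}{2}[1,-1,-1,1]^T+\frac{1}{2}[1,1,-1,-1]^T$, which mixes the sign subspace with $S^{((1),(1))}$. So the final step ``read off which basis vector of $R_{2,2}$ each column is a scalar multiple of'' simply does not apply there; you must project onto isotypic components or compute ranks.

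A second, structural problem is that you have misplaced the isotypic component $S^{((1),(1))^{\oplus 2}}$: it is a four-dimensional subspace of each eight-dimensional \emph{orbit of the profile space}, not of the four-dimensional results space $R_{2,2}$, where $S^{((1),(1))}$ occurs exactly once. Consequently the clauses of the proposition about ``which precise two-dimensional subspace is in the effective space'' (and its complement being in the kernel) are statements about the row space, equivalently the kernel, of the $4\times 8$ matrix; a column-by-column computation of images cannot by itself decide them. Your computation does determine the image in $R_{2,2}$ and hence, via the rank, the isomorphism type of the effective space, but to locate the effective space inside $S^{((1),(1))^{\oplus 2}}\subset P_{2,2}$ and to verify the kill conditions $x_3=\pm x_4$ in orbits 2 and 3, you need the rank and row-space computations that the paper's proof actually performs.
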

\begin{proof}
This all follows by direct computation of the rank of each $4\times 8$ matrix, or in a few cases checking whether the all-ones vector is in its row space.
\end{proof}

We can verify now that if all weights are the same, as in Theorem \ref{thm:expandedBarceloEtAl4.2}, we recover computations from voting theory where all four committees are treated identically (such as the first part of Proposition \ref{prop:bordaweightsaction}).  As samples, we include another confirmation of this, and then a computation with an interesting system.

\begin{cor}
Even if $x_1=x_2=0$, it is impossible to get only some $S^{((1),(1))}$ as the effective space of the entire $T_{\bf w}$.  Moreover, it's impossible to get just $S^{((2),\emptyset)}\oplus S^{(\emptyset,(2))}$, so there is no possible way to get a two-dimensional effective space.
\end{cor}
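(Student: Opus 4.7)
The plan is to enumerate the possible two-dimensional $\mathbb{Q}S_2\wr S_2$-invariant subspaces of $R_{2,2}$ and rule out each using Proposition~\ref{prop:2wr2forpockets}. By Theorem~\ref{Correct3.8}, $R_{2,2}\simeq S^{((2),\emptyset)}\oplus S^{((1),(1))}\oplus S^{(\emptyset,(2))}$, with the outer summands one-dimensional and the (irreducible) middle summand two-dimensional. Hence any two-dimensional invariant subspace is either the single copy of $S^{((1),(1))}$ or the direct sum $S^{((2),\emptyset)}\oplus S^{(\emptyset,(2))}$, so it suffices to exclude both as candidates for the effective space of $T_{\bf w}$.

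For the first candidate (the Borda summand alone), I proceed under the hypothesis $x_1=x_2=0$. The assumption $x_1=0$ already kills every orbit's contribution to $S^{((2),\emptyset)}$, and $x_2=0$ kills the $S^{(\emptyset,(2))}$ contribution from orbit~1 by Proposition~\ref{prop:2wr2forpockets}. That same proposition, however, shows the $S^{(\emptyset,(2))}$ contribution from orbit~2 vanishes precisely when $x_3=x_4$, while the contribution from orbit~3 vanishes precisely when $x_3=-x_4$. Because a single weight vector ${\bf w}$ reuses the same $x_3,x_4$ across all three orbits, imposing both relations simultaneously forces $x_3=x_4=0$. Combined with $x_1=x_2=0$, this collapses ${\bf w}$ to zero and $T_{\bf w}$ to the zero map, so the effective space is $\{0\}$, not $S^{((1),(1))}$.

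For the second candidate ($S^{((2),\emptyset)}\oplus S^{(\emptyset,(2))}$), the entire Borda summand must be absent from the image of $T_{\bf w}$. Proposition~\ref{prop:2wr2forpockets} pins the orbit~1 contribution to $S^{((1),(1))}$ on the pair $(x_3,x_4)$, so killing it forces $x_3=x_4=0$. With those relations in hand, the $S^{((1),(1))}$ contributions promised from orbits~2 and~3 (which the proposition links to $x_2$ together with $x_3\pm x_4$) collapse to terms governed by $x_2$ alone, and cancelling these forces $x_2=0$ as well. But $x_2=x_3=x_4=0$ also annihilates every sign-subspace contribution tracked by the same proposition, leaving the image inside $S^{((2),\emptyset)}$; the effective space is therefore at most one-dimensional, never the full $S^{((2),\emptyset)}\oplus S^{(\emptyset,(2))}$.

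The main obstacle, and also the crux, is the coupling between orbits: in the single-weight setup, the sign-killing conditions $x_3=x_4$ (from orbit~2) and $x_3=-x_4$ (from orbit~3) admit only the trivial joint solution, and the analogous Borda-killing conditions similarly collapse $(x_2,x_3,x_4)$ to $(0,0,0)$. Consequently any attempt to shrink the image to a two-dimensional submodule overshoots and forces $T_{\bf w}$ to degenerate further, ruling out both candidate effective spaces and completing the proof.
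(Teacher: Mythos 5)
Your proof is correct and is exactly the argument the paper intends: the corollary is left as an immediate consequence of Proposition \ref{prop:2wr2forpockets}, and your case analysis (the two candidate two-dimensional submodules, the incompatibility of the orbit-2 condition $x_3=x_4$ with the orbit-3 condition $x_3=-x_4$, and the collapse to the zero map or to a one-dimensional trivial effective space) is precisely how it follows. No gaps.
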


Consider the following variant on Example \ref{ex:wine}.

\begin{example}\label{ex:oneweight}
If ${\bf w}=[1/2,-1/2,-1/2,1/2]^T$ (first and last place count) then $x_2=1/2$, so the system seems simple.  But while the first orbit has $S^{(\emptyset,(2))}$ as its effective space, the other orbits have $S^{((1),(1))}$ (in fact, \emph{identical}s, not just isomorphic, ones, given the order of the rankings) so the overall effective space is three-dimensional.  

A typical basis vector in the effective space consists of a profile with a vote for all rankings with $W$ in first or last place, and a negative vote for all others.   This rule would send such a vector to a vector in $R$ in the space spanned by $[3,-1,-1,-1]^T$, where $W$ is the only winner and the other committees tie.
\end{example}

If you don't buy the connection to Example \ref{ex:wine}, for another reason why Example \ref{ex:oneweight} is important, suppose we are examining a system using only one ${\bf w}$ where $a+d\neq b+c$, such as $a=5$, $b=4$, $c=3$, and $d=0$ which attempts to de-emphasize the last-place committee on a given ballot.  This will have a nonzero $x_2$, so the example applies to this rule as well.

\subsection{More interesting systems for $S_2\wr S_2$.}\label{subsec:systems2wr2different}

The biggest interest in these computations comes from what would happen if we did \emph{different} things in each orbit.  First, let us examine the theoretical question of whether we could avoid the results of Proposition \ref{prop:2wr2forpockets} which have each orbit behaving differently.

\begin{remark}\label{rmk:2wr2identdecomp}
We can use Proposition \ref{prop:2wr2forpockets} to make all three orbits have \emph{identical} effective spaces\footnote{Here by `identical' of course we mean identical as eight-dimensional vector spaces under the isomorphism given by identifying the $\mathbb{Q}^8$ spaces in the obvious way based on Figure \ref{figure:2wr2weights}.} by undoing the permutations which yield the orbits.
\begin{itemize}
\item For the first orbit, let $a_1,b_1,c_1,d_1=a,b,c,d$.
\item In the second orbit let $a_2=a$, but let $b_2=c$, $c_2=d$, and $d_2=b$.
\item In the third orbit, let $a_3=a$, $c_3=c$, but $b_3=d$ and $d_3=b$.
\end{itemize}
\end{remark}

\begin{example}\label{ex:2wr2allborda}
Recalling Proposition \ref{prop:bordaweightsaction}, we examine a `permuted'  Borda count, which should yield results orthogonal to $[1,-1,-1,1]^T$, which generates $S^{(\emptyset,(2))}$ in $R_{2,2}$.  For convenience we switch to the otherwise equivalent sum-zero set of weights ${\bf w}_1=[1,1/3,-1/3,-1]^T$, ${\bf w}_2=[1,-1/3,-1,1/3]^T$, and ${\bf w}_3=[1,-1,-1/3,1/3]^T$. 

Consider a profile (differential) with three voters for each of the six ballots with $W$ in first place, one for each one with $W$ in second place, negative one for $W$ in third place, and negative three for each with $W$ in last place.   Using $w_i$ as weights, we obtain a result with $W$ the lone winner and $Z$ the lone (!) \emph{loser}, with $Y$ and $X$ receiving a net score of zero; indeed, this profile is a basis vector for the effective space.  (Compare to if we had used $w_1$ for all three orbits, where we would\footnote{As with the Borda subspace in \cite{CrismanPermuta,OrrisonSymmetry,SaariStruct1,SaariStruct2}.} get $W$ as the winner with others tied, as in Example \ref{ex:oneweight}.)

As strange as it may seem on first glance, this effective space makes perfect sense from the wreath product (committee) standpoint; if we have favored $W$, then the only committee disjoint with it is $Z$, which should thus be disfavored.  The same thing would thus happen if we had made the same profile, but with $X$ as the favored committee (where $Y$ would have been the unique loser).
\end{example}

\begin{example}\label{ex:wine3}
Suppose ${\bf w}_1=[1/2,-1/2,-1/2,1/2]^T$, ${\bf w}_2=[1/2,-1/2,1/2,-1/2]^T$, and ${\bf w}_3=[1/2,1/2,-1/2,-1/2]^T$; this is Example \ref{ex:oneweight} but with weights moved as in Remark \ref{rmk:2wr2identdecomp}.  

A typical basis vector in the effective space is as follows: In the orbit order from Figure \ref{figure:2wr2pockets}, the first orbit has $1$ voter for each ranking with $W$ (or $Z$) in first or last place and $-1$ voter for the other rankings, and in the second and third orbits the same except $W$(/$Z$) in first or third place, and $W$(/$Z$) in first or second place, respectively.  

Now instead of emphasizing a particular committee, the effective space emphasizes a particular \emph{structure}!  And since the procedure simply projects from this subspace and kills anything orthogonal to it, the result will \emph{necessarily} be either a tie $W/Z$ winning, $X/Y$ losing, or vice versa.  This is quite different from Example \ref{ex:oneweight}, and in an interesting way.
\end{example}

The procedure in the previous example may not be a very useful one, but it's possible without further constraints -- and \emph{any} system with weights not orthogonal to these has some of its behavior.  Notice the importance a decomposition makes!  


Most interesting would be systems which had \emph{unrelated} weights for the different orbits.

\begin{example}
Suppose that one wanted to create a system that `rewarded' ballots which made the `logical' choice of putting disjoint committees in first and last place.  One could simply use Borda weights $w_1$ for the first orbit, and then use \emph{zero} for all other weights!  

Needless to say, the second two orbits would have zero effective space; only people who voted `correctly' would even have usable ballots.
\end{example}

\begin{example}
Suppose our motivation is as in the previous example, but we did not want to completely ignore `incorrect' ballots.  A plausible set of weights would be ${\bf w}_1=[1,1/2,-1/2, -1]^T$, but ${\bf w}_2={\bf w}_3=[1,-1/3,-1/3,-1/3]^T$.  This, roughly speaking, is a Borda count on the first orbit but plurality on the other two orbits.

Applying Proposition \ref{prop:2wr2forpockets} to this case, we see that the kernel in the first orbit, as desired, will be $S^{((2),\emptyset)}\oplus S^{(\emptyset,(2))}$.  However, from the other two orbits the kernel will be (identically) $S^{((2),\emptyset)}$.  So in some sense, even though we might be trying to \emph{minimize} the effect of the `undesired' type of ballots, in terms of the dimension of the effective spaces, they have \emph{more} impact in the sense of which profiles are taken account of.

\end{example}

\begin{example}
In the previous example, there really isn't any particular reason to pick the scaling we did, where $1$ is the highest number of points a ranking can get from a given ballot.   Suppose the motivation for these weights was instead wanting to imitate combining Borda and plurality as nearly as possible.  

In that case, one could point out that the $-1$ given to the least preferred ranking in the first orbit `punishes' it more than the $-1/3$ in the other two orbits.  In an effort to rectify this potential issue, we could normalize differently and use ${\bf w}_1=[1,1/2,-1/2, -1]^T$, but ${\bf w}_2={\bf w}_3=[1,-1,-1,-1]^T$.  But now the second weighting vector is no longer sum-zero!  From the point of view of the \emph{differentials} in points assigned to $W,X,Y,Z$ it behaves the same as the weighting vector ${\bf w}_2={\bf w}_3=[1.5,-0.5,-0.5,-0.5]^T$, which keeps the same problem in addition to adding the problem of giving more points to the most-preferred option!  This is probably the opposite effect\footnote{For this vector $x_2\neq 0$, so the effective space is four-dimensional for ${\bf w}_{2,3}$ too.} of that desired.
\end{example}

The point should be clear; there are a \emph{very} large number of possible voting rules.  On the one hand, it allows for a lot of flexibility in designing weights to `reward' certain behavior, but on the other hand moving even a little away from a `normal' voting system could yield many violations of standard voting axioms like Pareto.

\section{Conclusion}\label{sec:conclusion}

\subsection{Future Work}
Needless to say, there is a lot of room for further research.  We have scrupulously avoided even hinting at what axioms (other than neutrality and anonymity) might be relevant; as seen in the last few examples, even Pareto may need to be reconsidered when multiple orbits are involved.  

For instance, it does not seem obvious at all what we should do with ballots that do not put disjoint committees in opposite ranks.  Nevertheless, \cite{RatliffPublicChoice} and \cite{AliceBob} make it clear there are serious separability issues in real-life preferences, so such preferences \emph{will} appear.  Our model of examining the rules in this way gives the widest possible leeway for future analysis, and we make no apology for it.

Another area for future work is to use the strategy of symmetry (and/or representations) more explicitly in other combinatorial contexts.  Two places come immediately to mind.
\begin{itemize}
\item One can use a similar framework with the direct product to consider `diverse' committees directly, as in \cite{RatliffSaariComplexities} and \cite{RatliffSelectingDiverse}.  Here, we would \emph{not} consider all wreath symmetries, as that would destroy the diversity criterion.  Rather, the direct product may be the appropriate group.
\item The work in \cite{ElkindEtAlMultiwinner}, \cite{SkowronEtAlAxiomatic}, and \cite{FaliszewskiEtAlNewChallenge} is further from the model presented here, but clearly the committee scoring functions are analogous to the ones here.  The space of profiles would now be, as in normal voting, the regular representation of $S_n$, whereas the output might differ.  For instance, the $k$-voting rules would presumably use a space isomorphic to that decomposed so effectively for cooperative games in Kleinberg and Weiss and elsewhere\footnote{See \cite{KWAlgGames}, and the survey \cite{CrismanOrrison} for more references, including to some quite recent authors.}.
\end{itemize}

We look forward to seeing much more activity in the area of more unusual combinatorial settings such as these, especially connecting the algebraic framework with a more traditional axiomatic one.

\subsection{Funding Sources and Acknowledgements}

This work was supported by the Gordon College Faculty Development Committee and Provost's Office via a sabbatical leave.  The SQuaRE program at the American Institute of Mathematics supported this work indirectly by providing a good working space for improving my overall knowledge about representations and voting, though not this specific research.

Finally, thanks are due to the authors of \cite{LeeThesis} and \cite{BarceloEtAl} for many clear examples, to my SQuaRE team for moral support, as well as to the entire development team of Sage Math (\cite{Sage}), software which made discovering and verifying many of the results here much, much easier.   

\appendix

\section{Proofs}\label{sec:proofs}

\subsection{Proofs from Section \ref{sec:results}}\label{subsec:resultsproofs}

\begin{proof}[Proof of Theorem \ref{Correct3.8}]\label{proof:Correct3.8}
Recall (e.g.~from \cite{Sagan}) that the \emph{natural representation} of $S_m$ on the set of integers $[m]=\{1,2,\ldots ,m\}$ decomposes canonically as $\mathcal{N}\simeq S^{(m)}\bigoplus S^{(m-1,1)}$, where $S^\lambda$ is the irreducible $\mathbb{Q}S_m$-module indexed by the partition $\lambda \vdash m$.  

Character computations in \cite{BarceloEtAl} show $R_{m,n}$ is related very closely to $\otimes^n \mathcal{N}$.  However, \cite{JamesKerber} 4.3.9 is misstated as applying to a \emph{induced} representation rather than an \emph{extended} representation.  Extension changes the dimension, but here we will simply consider the $\mathbb{Q}S_m^n$-module $\otimes^n \mathcal{N}$ \emph{as a $\mathbb{Q}S_m\wr S_n$-module}, which follows because one can simply permute the coordinates of $\otimes^n \mathcal{N}$ for the action of $\pi$.  Hence\footnote{\cite{JamesKerber} 4.3.9 says in general that for a representation $D$ of $S_m$, $\chi_{\left(\otimes^n D\right)^\sim}=\prod_{\nu=1}^{c(\pi)}\chi_D(g_\nu(\sigma;\pi))$.}  we can say $$R_{m,n}\simeq \left(\otimes^n \mathcal{N}\right)^\sim$$ where the tilde denotes considering the $\mathbb{Q}S_m^n$-module to be over the larger group.

The remainder can follow \cite{BarceloEtAl} quite closely.   We can distribute the tensor product over this sum and see that $$R\simeq \left( \bigoplus_{k=0}^n\bigoplus_{\binom{n}{k}} \left(  S^{(m)^{\otimes^{n-k}}}\otimes S^{(m-1,1)^{\otimes^k}} \right)\right)^\sim\, ,$$ where each $S^{(m)^{\otimes^{n-k}}}\otimes S^{(m-1,1)^{\otimes^k}}$ is isomorphic \emph{but not equal} to each other under the natural permutations under $S_n$ of subsets of $[n]=\{1,2,\ldots ,n\}$ of size $k$ corresponding to the $S^{(m-1,1)}$ modules.  The action of $S_n$ clearly restricts to the sums for each $k$  separately, so we may write $$\bigoplus_{k=0}^n\left( \bigoplus_{\binom{n}{k}} \left(  S^{(m)^{\otimes^{n-k}}}\otimes S^{(m-1,1)^{\otimes^k}} \right)\right)^\sim\, .$$

Now consider for a moment any \emph{one} of the $S^{(m)^{\otimes^{n-k}}}\otimes S^{(m-1,1)^{\otimes^k}}$; a generic one may look like $S^{(m)}\otimes S^{(m-1,1)}\otimes S^{(m)}\otimes \cdots \otimes S^{(m)}\otimes S^{(m-1,1)}$.  This has a Young/inertia subgroup\footnote{See \cite{JamesKerber} for more details.} $Y_k\simeq S_{n-k}\times S_k$.  Notice that there is still a natural action of $Y_k$ on $S_m^n$, so that the wreath product $S_m\wr Y_k$ makes sense.  Its order is $m^n k!(n-k)!$, which means that $[S_m\wr S_n:S_m\wr Y_k]=\binom{n}{k}$, which would be the number of copies of the smaller module needed to create an induced module over $S_m\wr S_n$.

Returning now to $\left(\bigoplus_{\binom{n}{k}} \left(  S^{(m)^{\otimes^{n-k}}}\otimes S^{(m-1,1)^{\otimes^k}} \right)\right)^\sim$, let us consider the way in which this is \emph{extended} to a $S_m\wr S_n$-module.  If we had taken \emph{one} of the $S^{(m)^{\otimes^{n-k}}}\otimes S^{(m-1,1)^{\otimes^k}}$ as an $S_m^n$-module, then \emph{extended} that in the obvious way to a $S_m\wr Y_k$-module, and then \emph{induced} it to $S_m\wr S_n$, we would get precisely the same thing (because $S^{(m)^{\otimes^{n-k}}}\otimes S^{(m-1,1)^{\otimes^k}}$ stands in notationally for all $\binom{n}{k}$ possible orderings which inducing would need).  That is, we may write $$\left(\bigoplus_{\binom{n}{k}} \left(  S^{(m)^{\otimes^{n-k}}}\otimes S^{(m-1,1)^{\otimes^k}} \right)\right)^\sim=\left(  \left( S^{(m)^{\otimes^{n-k}}}\otimes S^{(m-1,1)^{\otimes^k}} \right)^\sim  \right)\ind^{S_m\wr S_n}_{S_m\wr Y_k}$$

Finally, as in (3.10) of \cite{BarceloEtAl} we may tensor with a trivial representation.  To be precise, the trivial representation $S^{(n-k)}\otimes S^{(k)}$ of $Y_k$ may be extended to $S_m\wr Y_k$ by acting on the individual $S_m$ components trivially, but permuting them via any $\pi\in Y_k$.  So we can write this same representation $$\left(  \left( S^{(m)^{\otimes^{n-k}}}\otimes S^{(m-1,1)^{\otimes^k}} \right)   \otimes   \left(S^{(n-k)}\otimes S^{(k)}\right)' \right)\ind^{S_m\wr S_n}_{S_m\wr Y_k}$$ and the rest of the proof from \cite{BarceloEtAl} goes through.
\end{proof}

\begin{proof}[Proof of Proposition \ref{prop:committeek}]
Let $C$ and $C'$ be as in the statement, and let $C\cap C'=D'$ of cardinality $f'$.  For each subset $c_k$ compatible with $C$, let $c'_k=C'\cap c_k$, of cardinality $f'_k$.  

Now consider another committee $C''$ such that $C\cap C''=D''$ and $f''=f'$ as well.  I claim there is an element $(\sigma;\pi)\in S_m\wr S_n$ such that $(\sigma;\pi)(C')=C''$ so that the multiset of cardinalities of the form $f''_k$ of the $c''_k=C''\cap c_k$ is the same (as multisets) as the collection of $f'_k$.  

First, pick any $\pi$ such that $\pi$ restricted to $D'$ yields $D''$.  Since the action of $S_n$ on the set of departments is clearly transitive and the cardinalities of these subsets is the same, this is possible.  Next, for each $i\in [n]$ suppose that $\pi(i)=j$, $C'$ includes $i_{k_{i'}}$ and $C''$ includes $j_{k_{j''}}$.  Then let $\sigma_i(k_{i'})=k_{j''}$ as well as $\sigma_i(k_i)=k_j$ (and otherwise arbitrary).  These restrictions are compatible on $D'$ (so that $(\sigma;\pi)(D')=D''$) and, just as importantly, preserve the intersections, so that $C\cap (\sigma;\pi)(C')=D''$.

From this construction, it is immediate that $(\sigma;\pi)(c'_k)$ is a set of committee members such that the size of the intersection with $C$ is preserved (though not necessarily the specific elements); in fact, it would be $(\sigma;\pi)(c_k)\cap (\sigma;\pi)(C'')$, which is one of the $c''_k$ sets.

Now we return to $b_C$.  This is defined to be the sum of $b_{c_k}$ of the types in Proposition \ref{prop:kronecker}, for precisely these $c_k$.  By this same proposition, its numerical value at the entry for any $C'$ must be the sum of numbers of the form $(m-1)^{f'_k}(-1)^{k-f'_k}$, for all $c_k$.  However, we have just shown that the multiset of numbers $f'_k$ themselves is invariant, given the cardinality of the intersection $C\cap C'$.
\end{proof}

\begin{proof}[Proof of Proposition \ref{prop:level2component}]
Certainly the constant term corresponds to choosing all $\binom{n}{2}$ subcommittees of size $k=2$, and checking whether they fully agree with the reference committee, in which case they get $(m-1)^{k-0}(-1)^0=(m-1)^2$ points from each basis vector summed in Proposition \ref{prop:committeek} from Proposition \ref{prop:kronecker}.

Suppose we disagree in precisely $d=1$ department.  Then actually $n-1$ of the subcommittees of size two must disagree, but only in one spot, so they received $(m-1)^{k-1}(-1)^1=-(m-1)$ points.  This is a net loss of $m(m-1)$ points for each of these $n-1$ possibilities.

Beyond this, each time we disagree with $C$ in one more department ($d$ increases by one) then $n-d$ more of the subcommittees of size two disagree in one department.  That means the \emph{change} in the change is by $m(m-1)$ \emph{less}.  On the other hand, each time $d$ increases (starting with $d=2$) we add one more subcommittee of size two which disagrees \emph{completely} with the reference committee, all of which receive $1$ point (so, a change from $-(m-1)$ to $+1$).  This means the change in the change overall is $m(m-1)+m=m^2$ with each increase in $d$.

Now using Newton's interpolation formula 
for finite differences\footnote{A very useful analogue of antidifferentiation for discrete calculus.}, and noting that $2!=2$ and the falling factorial $(d)_2=d(d-1)$, we get our result.
\end{proof}

\subsection{Proofs from Section \ref{sec:profile}}\label{subsec:proofsprofile}

\begin{proof}[Proof of Theorem \ref{thm:expandedBarceloEtAl4.2}]
Most of the proof goes through.  First, in the construction of $L$, rather than saying ${\bf w}\in \oplus S^{\lambda}$, it should really be that ${\bf \tilde{w}}\in \oplus S^{\lambda}$ such that ${\bf \tilde{w}}$ restricted to $R$ is isomorphic to ${\bf w}$, and likewise for ${\bf w}_i$.  

More importantly, in the proof of Theorem 4.2 of \cite{BarceloEtAl}, Jacobson density is used to create a profile $\p\in \mathbb{Q}S_m\wr S_n$, but then is discussed as if it were in $\mathbb{Q}S_{m^n}$; however, the fact that $\p$ is actually in that subalgebra is the novel part of their theorem.   Further, in the decomposition into orbits of $P_{m,n}$, one can act by some other element of $S^{m^n}$ to obtain a new profile that otherwise will have the same properties.
\end{proof}

\subsection{Proofs from Section \ref{sec:systems}}\label{subsec:proofssystems}

\begin{proof}[Proof of Proposition \ref{prop:bordaweightsborda}]
We use the basis from Fact \ref{fact:otherbordabasis}.  Let us project a sum-zero version of the Borda weighting vector $$\left[\frac{m^n-1}{2},\frac{m^n-3}{2},\ldots ,\frac{1-m^n}{2}\right]^T$$ to each of the $B_d$ subspaces; if the sum of those projections is the weighting vector, since the $B_d$ are orthogonal, we are done.

Consider vectors of the form $\frac{m-2i+1}{2}m^{n-d}$ as the entry for every entry corresponding to a committee of the form $(\ldots , d_{i},\ldots)$, for $1\leq i\leq m$.  Since $\sum_{c=1}^{m-1}-\frac{m-2c+1}{2}=\frac{1-m}{2}$, this vector is certainly a sum of elements of the alternate basis.  If we add all these vectors, the entry corresponding to $C'=(1_{j_1},2_{j_2},\ldots,n_{j_n})$ in that sum is $$\sum_{d=1}^n \frac{m-2j_d+1}{2}m^{n-d}\; .$$  By reindexing this we obtain $$\sum_{e=0}^{n-1} \frac{m-2j_{n-e}+1}{2}m^{e}$$ which may be thought of as a `balanced $n$-ary expansion' in the lexicographic order of a unique (half-)integer from $\frac{1-m^n}{2}$ to $\frac{m^n-1}{2}$ -- precisely the ones in our sum-zero Borda weighting vector.
\end{proof}

\bibliographystyle{amsplain}
\bibliography{refs}

\providecommand{\bysame}{\leavevmode\hbox to3em{\hrulefill}\thinspace}
\providecommand{\MR}{\relax\ifhmode\unskip\space\fi MR }
\providecommand{\MRhref}[2]{%
  \href{http://www.ams.org/mathscinet-getitem?mr=#1}{#2}
}
\providecommand{\href}[2]{#2}
\begin{thebibliography}{10}

\bibitem{BarceloEtAl}
H\'{e}l\`ene Barcelo, Megan Bernstein, Sarah Bockting-Conrad, Erin McNicholas,
  Kathryn Nyman, and Shira Viel, \emph{Algebraic voting theory \&
  representations of {$S_m \wr S_n$}}, Adv. in Appl. Math. \textbf{120} (2020),
  no.~102077, 1--23. \MR{4121950}

\bibitem{CrismanPermuta}
Karl-Dieter Crisman, \emph{The {B}orda count, the {K}emeny rule, and the
  permutahedron}, The mathematics of decisions, elections, and games, Contemp.
  Math., vol. 624, Amer. Math. Soc., Providence, RI, 2014, pp.~101--134.
  \MR{3308273}

\bibitem{CrismanOrrison}
Karl-Dieter Crisman and Michael~E. Orrison, \emph{Representation theory of the
  symmetric group in voting theory and game theory}, Algebraic and geometric
  methods in discrete mathematics, Contemp. Math., vol. 685, Amer. Math. Soc.,
  Providence, RI, 2017, pp.~97--115. \MR{3625572}

\bibitem{OrrisonSymmetry}
Zajj Daugherty, Alexander~K. Eustis, Gregory Minton, and Michael~E. Orrison,
  \emph{Voting, the symmetric group, and representation theory}, Amer. Math.
  Monthly \textbf{116} (2009), no.~8, 667--687. \MR{MR2572103}

\bibitem{DixonMortimer}
John~D. Dixon and Brian Mortimer, \emph{Permutation groups}, Graduate Texts in
  Mathematics, vol. 163, Springer-Verlag, New York, 1996. \MR{1409812}

\bibitem{ElkindEtAlMultiwinner}
Edith Elkind, Piotr Faliszewski, Piotr Skowron, and Arkadii Slinko,
  \emph{Properties of multiwinner voting rules}, Soc. Choice Welf. \textbf{48}
  (2017), no.~3, 599--632. \MR{3620814}

\bibitem{FaliszewskiEtAlNewChallenge}
Piotr Faliszewski, Piotr Skowron, Arkadii Slinko, and Nimrod Talmon,
  \emph{Multiwinner voting: A new challenge for social choice theory}, Trends
  in Computational Social Choice, AI Access (Lulu), 2017, p.~Chapter 2.

\bibitem{4453618}
Eric~Wofsey (https://math.stackexchange.com/users/86856/eric wofsey),
  \emph{making basis for a vector space from bases for subspaces}, Mathematics
  Stack Exchange, URL:https://math.stackexchange.com/q/4453618 (version:
  2022-05-18).

\bibitem{JamesKerber}
Gordon James and Adalbert Kerber, \emph{The representation theory of the
  symmetric group}, Encyclopedia of Mathematics and its Applications, vol.~16,
  Addison-Wesley Publishing Co., Reading, Mass., 1981, With a foreword by P. M.
  Cohn, With an introduction by Gilbert de B. Robinson. \MR{MR644144
  (83k:20003)}

\bibitem{KWAlgGames}
Norman~L. Kleinberg and Jeffrey~H. Weiss, \emph{Algebraic structure of games},
  Math. Social Sci. \textbf{9} (1985), no.~1, 35--44. \MR{795555 (86f:90175)}

\bibitem{LeeThesis}
Stephen Lee, \emph{Understanding voting for committees using wreath products},
  2010.

\bibitem{RatliffPublicChoice}
Thomas~C. Ratliff, \emph{Selecting commitees}, Public Choice \textbf{126}
  (2006), no.~3, 343--355.

\bibitem{RatliffSelectingDiverse}
\bysame, \emph{Selecting diverse committees with candidates from multiple
  categories}, The mathematics of decisions, elections, and games, Contemp.
  Math., vol. 624, Amer. Math. Soc., Providence, RI, 2014, pp.~159--175.
  \MR{3308276}

\bibitem{RatliffSaariComplexities}
Thomas~C. Ratliff and Donald~G. Saari, \emph{Complexities of electing diverse
  committees}, Soc. Choice Welf. \textbf{43} (2014), no.~1, 55--71.
  \MR{3217238}

\bibitem{SaariStruct1}
Donald~G. Saari, \emph{Mathematical structure of voting paradoxes. {I}.
  {P}airwise votes}, Econom. Theory \textbf{15} (2000), no.~1, 1--53.
  \MR{MR1731508 (2001e:91063)}

\bibitem{SaariStruct2}
\bysame, \emph{Mathematical structure of voting paradoxes. {II}. {P}ositional
  voting}, Econom. Theory \textbf{15} (2000), no.~1, 55--102. \MR{MR1731509
  (2001e:91064)}

\bibitem{Sagan}
Bruce~E. Sagan, \emph{The symmetric group}, second ed., Graduate Texts in
  Mathematics, vol. 203, Springer-Verlag, New York, 2001, Representations,
  combinatorial algorithms, and symmetric functions. \MR{1824028 (2001m:05261)}

\bibitem{SkowronEtAlAxiomatic}
Piotr Skowron, Piotr Faliszewski, and Arkadii Slinko, \emph{Axiomatic
  characterization of committee scoring rules}, J. Econom. Theory \textbf{180}
  (2019), 244--273. \MR{3898269}

\bibitem{Sage}
W.\thinspace{}A. Stein, M.~Hansen, et~al., \emph{{S}age {M}athematics
  {S}oftware ({V}ersion 9.0)}, The Sage~Development Team, 2020, {\tt
  http://www.sagemath.org}.

\bibitem{AliceBob}
Joel Uckelman, \emph{Alice and bob will fight: The problem of electing a
  committee in the presence of candidate interdependence}, ECAI 2010, IOS
  Press, 2020, pp.~1023--1024.

\end{thebibliography}

\end{document}